\documentclass[twocolumn]{IEEEtran}
\pdfoutput=1
\usepackage{url}
\usepackage{amsmath}
 
\usepackage{amsthm}
\usepackage[linesnumbered,ruled]{algorithm2e}
\usepackage{verbatim}
\usepackage{graphicx}
\usepackage{subfigure}
\newtheorem{theorem}{Theorem}
\newtheorem*{theorem_nn}{Theorem}
\newtheorem{lemma}{Lemma}

\newtheorem{corollary}{Corollary}
\newtheorem{proposition}{Proposition}

\newtheorem{prob}{Problem}

\theoremstyle{definition}
\newtheorem{IP}{Integer Program}

\graphicspath{{figures/}}
\usepackage{amsmath}
\usepackage{mathrsfs}
\usepackage{amssymb}
\usepackage{booktabs}
\usepackage{changepage}

\DeclareMathOperator*{\argmax}{arg\,max}

\usepackage{tikz}
\usetikzlibrary{arrows}

\begin{document}
\title{Pseudo-Separation for Assessment
of Structural Vulnerability of a Network}
\author{Alan Kuhnle, Tianyi Pan, Victoria G. Crawford, Md Abdul Alim, and My T. Thai\\
Department of Computer \& Information Science \& Engineering\\
University of Florida\\
Gainesville, Florida, USA\\
Email: \{kuhnle, tianyi, crawford, alim, mythai\}@cise.ufl.edu }
\maketitle
\begin{abstract}
Based upon the idea that network functionality 
is impaired if two nodes in a network are 
sufficiently
separated in terms of a given metric,
we introduce two combinatorial \emph{pseudocut} 
problems generalizing the classical 
min-cut and multi-cut problems. 
We expect the pseudocut problems 
will find broad relevance to the 
study of network reliability.
We comprehensively analyze 
the computational complexity of the pseudocut problems 
and provide three approximation algorithms for these problems. 

Motivated by applications in communication networks
with strict Quality-of-Service (QoS) requirements,
we demonstrate the utility of the pseudocut problems by proposing
a targeted vulnerability assessment for the structure of  
communication networks using QoS metrics; 
we perform experimental evaluations 
of our proposed approximation algorithms in this context.
\end{abstract}
\section{Introduction}
The concept of connectivity, or the existence of a path between two nodes, is 
vital for any network. Whatever functionality a network may provide to a pair
of nodes is usually absent if the pair is disconnected. As a result, many studies
of network vulnerability, or the degree to which the functionality of a network
may be disrupted by failures, have incorporated connectivity as a fundamental
measure of network functionality \cite{Grubesic2008,Arulselvan2009,Dinh2012b,Dinh2015a}.
Recognition of the importance of connectivity has led to the study of 
many combinatorial problems related to connectivity \cite{Leighton1999,Vazirani2001,Colbourn1987}, 
perhaps the most well-known of which is
the minimum cut problem (CUT),
of determining the minimum number of edges (vertices) to remove in order to disconnect
a pair $(s,t)$ of vertices in a graph. CUT was shown to be solvable in polynomial time
via the celebrated maximum flow minimum cut relationship \cite{Ford1956}.  

However, the functionality a network provides may break down even when elements of a network
remain connected. For example, suppose $G$ is a communication network, with edge lengths representing
transmission time delay over that edge. 
For nodes $s, t$ to communicate, it is necessary 
that the total time-delay on the routing path by which they communicate remain below
some threshold $T$. If the shortest-path distance between $s, t$ exceeds $T$, communication
breaks down, despite the fact that $s$ and $t$ are topologically connected within the network.
Another example is the shipping of a perishable item through a transportation network. If the
item reaches its destination after it has perished, it is of no use to the recipient. 
Therefore, instead of considering network failure to
occur if elements of the network are topologically separated, we
propose a more general measure of network failure: 
network functionality is impaired after the \emph{$T$-separation} of elements in a network, where $T$ is
a real number.  Two nodes $s,t$ are $T$-separated if the weighted shortest-path distance 
exceeds $T$.

As we demonstrate in this work, the $T$-separation analogue (defined below)
to the classical CUT problem cannot be reduced to CUT unless $P = NP$.
Given a constant $T$, the \emph{minimum $T$-pseudocut} (T-PCUT) problem
takes as input a directed graph $G$, targeted pair $(s,t)$, 
and distance function $d$ on the edges of $G$. 
The problem asks
for the minimum-size set of vertices (edges) $W$ to remove from $G$, such
that after the removal of $W$, the $d$-shortest paths distance $d(s,t) > T$. 
To demonstrate the differences between CUT and T-PCUT, consider
the following example. Let $G$ be the network shown in Fig. \ref{fig:example},
let $s = 0, t = 12$, and consider $d(e) = 1$ for each edge $e \in G$; finally,
set $T = 5$. An optimal solution to the vertex version of CUT (also known
as minimum vertex separator \cite{Vazirani2001}) must contain three nodes,
while the removal of $W =\{ 5, 7 \}$ is an optimal solution to this instance of $5$-PCUT;
after removal of $W$, $d(s,t) = 6 > T$. Observe that the naive proposal of eliminating
all vertices of distance greater than $T$ from $s$ and then solving CUT on the new
graph does not work, since every node $v$ in $G$ initially satisfies $d(s, v) \le 4$.

Although the new combinatorial problems we propose in this work should
be broadly applicable, the application in which we are most interested 
is structural vulnerability with respect to 
additive Quality-of-Service (QoS) metrics on 
communication networks. For example, the total time-delay, jitter, or
packet-loss\footnote{Packet-loss can be converted to an additive metric,
as described in Lemma \ref{lem:per}.} between two nodes
in a communication network are additive QoS metrics. For a given additive
QoS metric $Q$, the minimum acceptable threshold $T_Q$ for this metric
is a constant independent of any particular communication network, 
although it will vary with the desired communication application, 
such as voice or video call, process control, or machine control.
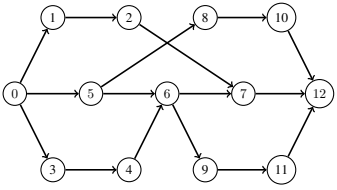
\begin{figure}
  \centering
  \resizebox{0.25\textwidth}{!}{
    \begin{tikzpicture}
      \begin{scope}[every node/.style={circle,thick,draw}]
    \node (0) at (0,0) {$0$};
    \node (5) at (2,0) {$5$};
    \node (6) at (4,0) {$6$};
    \node (7) at (6,0) {$7$};
    \node (12) at (8,0) {$12$};
    \node (1) at (1,2) {1};
    \node (2) at (3,2) {2};
    \node (8) at (5,2) {8};
    \node (10) at (7,2) {10};
    \node (3) at (1,-2) {3};
    \node (4) at (3,-2) {4};
    \node (9) at (5,-2) {9};
    \node (11) at (7,-2) {11};
\end{scope}

\begin{scope}[every node/.style={fill=white,circle},
              every edge/.style={draw=black,very thick}]
  \path [->] (0) edge (5);
  \path [->] (5) edge (6);
  \path [->] (6) edge (7);
  \path [->] (7) edge (12);
  \path [->] (0) edge (3);
  \path [->] (3) edge (4);
  \path [->] (4) edge (6);
  \path [->] (6) edge (9);
  \path [->] (9) edge (11);
  \path [->] (11) edge (12);
  \path [->] (5) edge (8);
  \path [->] (8) edge (10);
  \path [->] (10) edge (12);
  \path [->] (0) edge (1);
  \path [->] (1) edge (2);
  \path [->] (2) edge (7);
\end{scope}
\end{tikzpicture} } \label{fig:example}

\caption{A graph $G$ exemplifying the necessity of new solutions for the PCUT problem, as explained in the text. }
\end{figure}

\subsection{Our contributions}
\begin{itemize}
  \item We introduce $T$-separation analogues to the following 
    two classical
    combinatorial problems: the CUT problem defined above,
    and the MULTI-CUT problem \cite{Leighton1999}, in which
    $k$ pairs $(s_1, t_1), \ldots, (s_k, t_k)$ must be
    disconnected with minimum number of edges (nodes)
    removed.
    Collectively, we refer to these new formulations
    as \emph{pseudocut} problems, and they are
    respectively T-PCUT and T-MULTI-PCUT;
    these problems are formally defined in
    Section \ref{sect:prob}. 
  \item \emph{Computational complexity:} 
    We show that with arbitrary edge weights, $T$-PCUT is
    $NP$-complete. With uniform edge weights, we show $T$-MULTI-PCUT 
    is inapproximable within a factor of 1.3606 by approximation-preserving
    reduction from the minimum vertex cover problem.  
  \item \emph{Approximation algorithms:} For the T-PCUT and T-MULTI-PCUT
    problems with uniform edge weights, 
    we provide GEN, an $O( \log n )$-approximation algorithm; and
    FEN, a $(T + 1)$-approximation algorithm. In addition, we provide 
    GEST, an efficient, randomized algorithm with probabilistic performance guarantee:
    with probability $1 - 1 / n$, GEST returns a feasible solution with
    cost within ratio $O( \alpha \delta^T + \log k )$ of optimal, where
    $k$ is the number of pairs to $T$-separate, $\delta$ is the maximum
    degree in the graph, and $\alpha$ is user-defined parameter in $(0,1)$. 
    The time complexity of GEST is $O( k^3n\log (2n^2) / 2\alpha^2 )$, so $\alpha$
    gives the user control of the trade-off between performance and running time.
  \item \emph{Vulnerability assessment:} 
    Finally, we utilize the pseudocut problems 
    to formulate a vulnerability assessment for an arbitrary additive
    QoS metric on communication networks. 
    We then perform
    extensive experimental evaluations of our algorithms
    in the framework of this vulnerability assessment.
\end{itemize}
\subsection{Related work}
The theoretical results for min-cut, multi-cut, and partial multi-cut
vary depending on whether the edge or vertex version of the problem
is considered, and whether the graph is undirected or directed.
Table \ref{table:related} 
shows the current status of the best-known approximation ratios 
for each version of the problem, and the references where a proof of this ratio
may be found.
In contrast, our algorithms work equally well in undirected
or directed graphs and for the vertex or edge version of the pseudocut
problem. To the best of our knowledge, we are the first to consider
the pseudocut problems.

The seminal work of Ford and Fulkerson showed the max-flow and min-cut are
equal for the CUT problem \cite{Ford1956}. Leighton and Rao showed an
analogous result for the multi-cut problem \cite{Leighton1999} using
multicommodity max-flow, which gives $O( \log k )$-approximation algorithm
for the edge version of multi-cut problem in undirected graphs. For the
node version of multi-cut in undirected graphs, Garg et al.  \cite{Garg1994} gave
an $O( \log k )$-approximation algorithm.
For the edge version of multi-cut in directed graphs, 
Cheriyan et al. \cite{Cheriyan} gave an $O( \sqrt{n \log k})$-approximation;
Gupta \cite{Gupta2003} improved this ratio to $O( \sqrt{n} )$,
and finally Agarwal et al. \cite{Agarwal2007} improved the ratio to $O( n^{11/23})$.
For multi-cut in trees, Garg et al. \cite{Garg1993a} 
provided another max-flow min-cut
relationship, giving a $2$-approximation for multi-cut in trees.

A QoS-aware vulnerability assessment has been considered in Xuan et al.
\cite{Xuan2010}; however, the complexity of their assessment lies above 
even the $NP$ class as a valid solution cannot even be checked in polynomial time. 
A related problem to the single pair T-PCUT was studied by Israeli and Wood \cite{Israeli2002};
in this problem (MSP), given a fixed budget $k$ and pair $(s,t)$, a set of $k$ edges are
sought to maximize the shortest path between $s,t$. Israeli and Wood seek exact 
solutions using a bilevel optimization model, and this problem has been used 
as the basis for the detection of critical infrastructure and network vulnerability \cite{Scaparra2008,Brown2006}.
However, we emphasize the difference between T-PCUT and MSP: in T-PCUT, it is the
size (or cost) of the critical set that must be minimized; furthermore, MSP is 
formulated for edge interdiction only, while we primarily consider node interdiction.
Finally,
we have found only expensive exact methods to solve MSP; to the best of our knowledge,
no efficient solutions MSP with performance guarantee exist.
\begin{table}[htb]
\centering
\caption{Approximation results} \label{table:related}
\begin{tabular}{*3c}
\hline
Problem  &  Undirected & Directed \\
\hline
CUT (both)  & 1 & 1 \\
M-CUT (edge) &  $O( \log k )$ \cite{Vazirani2001}  & $O(n^{11/23})$ \cite{Agarwal2007} \\
M-CUT (vertex)  & $O( \log k )$ \cite{Garg1994} & - \\ 
\hline \hline
T-PCUT (both)  & $T + 1$ & $T + 1$ \\
T-M-PCUT (both) & $T + 1$ & $T + 1$ \\
\end{tabular}
\end{table}



\subsection{Organization}
The rest of this paper is organized as follows.
In Section \ref{sect:prob}, we define the pseudocut problems,
discuss motivating applications, define the QoS
vulnerability assessment, and formulate the pseudocut problems
as integer programs. In Section \ref{sect:complexity},
we analyze the computational complexity of the pseudocut problems.
In Section \ref{sect:approx_algs}, we present our three approximation
algorithms. In Section \ref{sect:exp},
we experimentally evaluate our algorithms in the context of the QoS vulnerability
assessments. Finally, in Section \ref{sect:conclusion}, we summarize
our contributions and discuss future work.

\section{Problem definitions} \label{sect:prob}
In this section, we introduce the vertex versions of
the pseudocut problems; the edge versions are presented
in Appendix \ref{apx:edge}.
Let $T$ be an arbitrary but fixed constant throughout this section.
The problems will take as input a triple $(G,c,d)$,
where $G$ is a directed graph $G = (V,E)$;
$c: V \to \mathbf{R}^+$ is a cost function on vertices
representing the difficulty of removing
each node; and $d: E \to \mathbf{R}^+$ is a length function on
edges. For example, $d(e)$ could be the latency or packet loss
on edge $e$. Although both $c$ and $d$ may be considered weight
functions, we use \emph{cost} for $c$ and \emph{length} for $d$
to avoid confusion.
The case when $c(v) = 1$ for all vertices is referred to
as \emph{uniform cost},
and the case when $d(e) = 1$ for all edges
is referred to as \emph{uniform length}.
The distance $d(u, v)$ between two vertices is the
length of the $d$-weighted, 
directed, and shortest path between $u$ and $v$; the
cost $c(W)$ of set $W$ of a set of vertices is the
sum of the costs of individual vertices in $W$.
\begin{prob}[Minimum $T$-pseudocut (T-PCUT)]
  Given triple $(G,c,d)$ and a pair $(s,t)$ of vertices of $G$, 
  determine a minimum cost set $W \subset V \backslash \{s, t\}$ of
  vertices such that
  $d( s, t ) > T$ after the removal of $W$ from $G$.
  \label{prob:min-pcut}
\end{prob}

Notice that in the formulation of T-PCUT, we disallow the pair endpoints
to be chosen in the solution -- for the non-uniform cost version,
this restriction is unnecessary since the
endpoints could be assigned higher cost; however, we include this
restriction since otherwise the optimal solution would be trivial
in the uniform cost version.

\begin{prob}[Minimum $T$-multi-pseudocut (T-MULTI-PCUT)]
  Given triple $(G,c,d)$, and a target set of pairs of vertices of $G$,
  $\mathcal{S} = \{ (s_1, t_1), (s_2, t_2), \ldots, (s_k, t_k) \}$,
  determine a minimum cost set $W$ of vertices such that
  $d( s_i, t_i ) > T$ for all $i$ after the
  removal of $W$ from $G$. \label{prob:multi-pcut}
\end{prob}


In contrast to T-PCUT, we allow 
picking members of
pairs in $\mathcal{S}$ into the solution of T-MULTI-PCUT; 
thus, there is always a feasible solution of size at most $k$. If a vertex $v$ is removed from $G$, we adopt the convention that $d(v,w) = \infty$ for all vertices $w \in G$.

In the above two formulations, we emphasize again that
the threshold $T$ is a fixed
constant independent of the input; in addition,
we introduce versions of these problems where $T$ is part
of the input. 
We will refer to the versions of these
problems where $T$ is an input as PCUT and MULTI-PCUT, respectively.
Finally, the algorithms in Section
\ref{sect:approx_algs} generalize to the 
edge versions of the problems as well, as discussed in
Appendix \ref{apx:ev-algs}.

\subsection{Motivation and applications for the pseudocut problems} \label{sect:motivation}
In this section, we give brief overviews of two potential applications
of the pseudocut problems. Motivated by these examples, we next provide the
vulnerability assessment for QoS on communication networks.

\subsubsection{Industrial Internet of Things} \label{sect:IIoT}
An emerging application for pseudocut problems is
the Industrial Internet of Things (IIoT). As everyday objects
become increasingly equipped with means for electronic
identification and communication, 
from Radio Frequency Identification (RFID) to smarter communication
capabilities, new applications and scenarios have emerged
in the Internet of Things \cite{Atzori2010,Xu2014}.

As surveyed in \cite{Sadeghi2015},
an emerging trend is to integrate communication capabilities into
industrial production systems. Such 
cyberphysical systems (CPS) in the production process 
are connected to conventional business IT networks. 
Integrated CPS 
allow extensive monitoring and control of production
facilities in real time. However, the QoS requirements
for control of production systems are very strict, and 
special routing protocols have been formulated to guarantee acceptable
QoS conditions \cite{Thrybom2009}. An IEEE task group
on Time-Sensitive Networking (TSN)
\cite{InstituteofElectricalandElectronicsEngineers} is
currently chartered to provide specifications to
allow time-synchronized low latency streaming services
through 802 networks. Critical data streams are guaranteed
certain end-to-end QoS by resource reservation; this
service is intended for industrial applications such
as process control, machine control, and vehicles;
and for audio/video streams.

As an example application for the T-PCUT, consider 
two nodes in IIoT as described above: $s$, a control node,
and $t$ a lower-level node. Further, suppose that
an acceptable level of packet loss ratio between $s,t$ is 
$10^{-10}$. Then, the problem instance of T-PCUT is
the IIoT network $G$, with edges $e$ weighted by the 
metric $d$ defined in Lemma \ref{lem:per} below. A solution to $10^{-10}$-PCUT problem 
for $(s,t)$ identifies
the most critical vertices whose proper functioning is required 
to ensure $p(s,t) < 10^{-10}$, where $p(s,t)$
is the cumulative packet
loss ratio between $s$ and $t$. 

To convert the packet error rate between
nodes to an additive metric, we define
the following transformation.
Given network $G = (V,E)$, let $p_{uv} \in [0,1]$ represent 
packet error rate for each edge $(u,v) \in E$.
Then, the transformation is 
\begin{equation} \label{eq:per}
  p_{uv} \rightarrow - \log \left(1 - p_{uv} \right).
\end{equation}
\begin{lemma} \label{lem:per}
  Let $p_{uv}$ represent packet error rate between each $(u, v) \in E$. 
  Then the transformation (\ref{eq:per}) yields an additive metric
  $d$ such that $1 - \exp \left(- d(s,t) \right)$ is the lowest cumulative
  packet error rate between nodes $s,t$ over all possible routing paths.
\end{lemma}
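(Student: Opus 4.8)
The plan is to connect the multiplicative behavior of per-edge packet success probabilities along a routing path with the additive structure induced by the transformation (\ref{eq:per}), and then exploit monotonicity to identify the $d$-shortest path with the path of minimum cumulative error rate. The whole argument rests on the elementary identity that a product becomes a sum under $-\log$.

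First I would fix an arbitrary routing path $P = (v_0, v_1, \ldots, v_m)$ with $s = v_0$ and $t = v_m$. Writing $q_{uv} = 1 - p_{uv}$ for the per-edge packet \emph{success} rate, and assuming (as is standard) that losses on distinct edges are independent, the probability that a packet traverses all of $P$ successfully is the product $\prod_{i=1}^m q_{v_{i-1} v_i}$. Consequently, the cumulative packet error rate incurred along $P$ is $1 - \prod_{i=1}^m q_{v_{i-1} v_i}$.

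Next I would check that $d$ is a nonnegative additive length and extract the key identity. Since $p_{uv} \in [0,1]$ gives $q_{uv} \in [0,1]$, we have $d(u,v) = -\log q_{uv} \ge 0$, with $d(u,v) = +\infty$ exactly for a totally lossy edge ($q_{uv} = 0$), which may be discarded. Applying $-\log$ to the product above converts it into a sum, so the $d$-length of $P$ satisfies $d(P) = \sum_{i=1}^m d(v_{i-1}, v_i) = -\log \prod_{i=1}^m q_{v_{i-1} v_i}$; equivalently, the success probability of $P$ is $\exp(-d(P))$ and its cumulative error rate is $1 - \exp(-d(P))$. This establishes additivity of $d$ along paths, and the induced shortest-path distance automatically satisfies the triangle inequality.

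Finally, since $x \mapsto 1 - e^{-x}$ is strictly increasing, minimizing the cumulative error rate $1 - \exp(-d(P))$ over all $s$--$t$ paths is equivalent to minimizing $d(P)$, whose minimum value is by definition the $d$-shortest-path distance $d(s,t)$. Hence the lowest achievable cumulative packet error rate over all routing paths is $1 - \exp(-d(s,t))$, as claimed. The main point requiring explicit justification is really the modeling assumption of independent per-edge losses, which is what licenses multiplying the success probabilities; once that is granted, the remainder is the clean algebraic passage from a product to a sum together with the monotonicity of $1 - e^{-x}$, which transfers the optimization from error rates to $d$-distances.
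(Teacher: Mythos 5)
Your proof is correct and follows essentially the same route as the paper's: converting the product of per-edge success probabilities into a sum via $-\log$, identifying the $d$-shortest path with the maximum-success-probability path, and concluding via monotonicity of $1-e^{-x}$. You are slightly more explicit than the paper in stating the independence assumption on edge losses and the final monotonicity step, but the argument is the same.
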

\begin{proof} 
Let $G=(V,E)$ with packet error rate
$p_{er}(e) \in (0,1)$ be given for each $e \in E$.
Let $d(e) = - \log ( 1 - p_{er}(e) )$. Let $s,t \in G$,
and $\mathscr{P}$ be the set of all paths in $G$
from $s$ to $t$.
Then
\begin{align*}
  d(s,t) &= \min_{p \in \mathscr{P}} \sum_{e \in p} d(e) \\
  &= \min_{p \in \mathscr{P}} \sum_{e \in p} - \log (1 - p_{er}(e)) \\
  &= - \max_{p \in \mathscr{P}} \log \prod_{e \in p} (1 - p_{er}(e)) 
\end{align*}
Now, $\prod_{e \in p} (1 - p_{er}(e))$ is the probability a packet is successfully
transmitted along path $p$. Thus, maximizing this probability over all paths
minimizes both $d(s,t)$ and the cumulative packet error rate between $s,t$.

Furthermore, if packet error rate threshold $P$ is given, then by similar
reasoning
\[ d(s,t) < - \log (1 - P) \iff p_{er}(s,t) < P, \] 
where $p_{er}(s,t)$ is the cumulative packet error rate
between $s,t$.
\end{proof}
\subsubsection{Military communications networks}
Next generation millitary communications networks will be 
multilayer, interdependent networks \cite{Xu2001,Ali2013,Juarez2006}
comprising wired fiber-optic and wireless components, including
satellite communications.
For example, consider
the proposed Army Warfighter Information Network-Tactical (WIN-T) network, 
the theory of operation for which is contained in \cite{Ali2013}. 
WIN-T comprises interdependent wireless and wired
components that are organized into layers;
the WIN-T multi-tiered architecture is organized as follows:
(1) the space layer, utilizing military
satellite communications (MILSATCOM) and commercial
satellite bands, (2) the airborne layer, consisting of unmanned
aerial vehicles (UAVs), (3) the ground layer, which contains many different
kinds of nodes. These nodes communicate to each other 
and nodes in the other layers in
a variety of ways including wired LANs, wireless WANs,
and satellite communications.

To ensure QoS in WIN-T, traffic is only admitted
to the WAN network when the network infrastructure and
congestion state offer a high probability that the
traffic can be delivered within QoS requirements specified
in WIN-T Baseline Requirements Document. Thus, communication
failure
between a pair $s,t$ of nodes in the network may occur
despite the existence of a routing path between $s$ and $t$
in the network, if any of the QoS metrics are greater than
a threshold $T$.

Therefore, the T-PCUT problem would identify the most critical
nodes if communication between a given pair of nodes $(s,t)$. For
example, $s$ could be a commanding node attempting to send an
order to infantry unit $t$. If communication between $s$ and $t$
is a high priority, critical nodes identified by T-PCUT would be
especially important to protect against an adversarial attack.

\subsubsection{Vulnerability assessment on communication networks} \label{sect:qos_vuln}
Motivated by the above two examples, 
we present a vulnerability assessment for communication
networks in this section.
Let $C = (V,E)$ represent a communication network. We fix an additive
QoS metric $Q$ on the edges of $C$. 
Since the QoS metric $Q$ is additive,
we define the QoS metric on the path 
$p = p_0p_1 \cdots p_l \in C$ as
\[ Q(p) = \sum_{i = 1}^l Q( p_{i - 1}, p_i ). \]
Furthermore,
we denote the metric between a pair $s,t$ as
$Q(s,t)$, the shortest-path distance between
$s,t$, where the weight of each edge in the network
is $Q(u,v)$. Clearly, no routing path could provide better
QoS with respect to $Q$ than the $Q$-shortest path. 
Let $T$ be a constant representing the threshold such
that if $Q(s,t) > T$ then communication between $s$ and $t$ is 
no longer possible.
Notice that since the value of $Q$ on each edge 
is determined by network parameters,
it has a minimum value $q_{min}$ which is a constant independent
of the network size.

Next, we define the problems of identification
of the most critical elements of the network 
with respect to the metric $Q$ and threshold $T$, and
a given targeted set of pairs $\mathcal{S}$ in the network, with
respect to $T$-separation.
\begin{prob}[Targeted Communication Vulnerability Assessment (TCVA)]
  Given communication network $C = (V,E)$, an additive quality of service metric
  $Q$, a threshold $T$ for $Q$ indicating the highest acceptable value
  of $Q$ for communication between a pair of nodes in $C$, a
  targeted set $\mathcal{S} = \{(s_1,t_1),(s_2,t_2), \ldots, (s_k,t_k)\}$, and a cost function $c$ on $C$,
  determine $W \subset V$ of minimum cost
  such if $W$ is removed from $C$, then for all $(u,v) \in \mathcal{S}$,
   $Q(u,v) > T.$
\end{prob}
Notice that TCVA is exactly the T-MULTI-PCUT problem with the edge 
length function equal to the QoS value on the edge. 

\subsection{Integer programming formulations}
In this section, we formulate the pseudocut problems as integer programs.
We will state the formulations for the pseudocut versions 
where $T$ is an input, but the same formulations apply when $T$ is a constant.
We formulate PCUT and MULTI-PCUT as 
integer programs in the following way.
Let an instance $(G,c,d,\mathcal{S}, T)$ of MULTI-PCUT be given.
We will consider simple paths $p = p_0 p_1 \ldots p_l \in G$; that
is, paths containing no cycles. 
Let $\mathcal{P}( s_i , t_i )$ denote the set of simple paths $p$
between $(s_i,t_i) \in \mathcal{S}$ that satisfy the 
condition $d(p) \le T$. If a vertex $u$ lies on path $p$,
we write $u \in p$. The following lemma relates the
optimal solution to MULTI-PCUT to the minimum-size
hitting set of $\mathcal{P} = \bigcup_{i = 1}^k \mathcal{P}(s_i,t_i),$
which is necessary for the integer programming formulation.
\begin{lemma} \label{lemm:cov_eq}
  Let $W^*$ be an optimal solution to an instance of
  MULTI-PCUT. Let $W'$ be a minimum cost 
  set of vertices satisfying $W' \cap p \neq \emptyset$
  for all $p \in \mathcal{P}( s_i, t_i )$ for all
  $(s_i, t_i) \in \mathcal{S}$.
  Then, $c( W' ) = c( W^* )$.
\end{lemma}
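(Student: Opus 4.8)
The plan is to establish the equality $c(W') = c(W^*)$ by proving two inequalities, showing that any feasible MULTI-PCUT solution is a feasible hitting set for $\mathcal{P}$ and vice versa. The key observation is that the condition ``$d(s_i,t_i) > T$ after removal of $W$'' is logically equivalent to ``$W$ intersects every path $p \in \mathcal{P}(s_i,t_i)$,'' because $\mathcal{P}(s_i,t_i)$ is precisely the set of simple paths from $s_i$ to $t_i$ whose length is at most $T$. Once this equivalence is nailed down, both $W^*$ and $W'$ are feasible for the same combinatorial constraint, so each being a minimum-cost feasible set forces $c(W') \le c(W^*)$ and $c(W^*) \le c(W')$.

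First I would prove that $W^*$ is a valid hitting set, i.e.\ $W^* \cap p \neq \emptyset$ for all $p \in \mathcal{P}(s_i,t_i)$ and all $i$. Suppose for contradiction that some $p \in \mathcal{P}(s_i,t_i)$ survives the removal of $W^*$, meaning $W^* \cap p = \emptyset$. Then $p$ is a path from $s_i$ to $t_i$ that remains intact in $G \setminus W^*$ with $d(p) \le T$. This witnesses a surviving routing path of length at most $T$, so the shortest-path distance satisfies $d(s_i,t_i) \le d(p) \le T$, contradicting the feasibility of $W^*$ (which requires $d(s_i,t_i) > T$). Hence $W^*$ hits every path in $\mathcal{P}$, and since $W'$ is a \emph{minimum} cost hitting set, $c(W') \le c(W^*)$.

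Next I would prove the reverse: that $W'$ is a feasible MULTI-PCUT solution, giving $c(W^*) \le c(W')$. Suppose $W'$ hits every path in $\mathcal{P}$ but, for contradiction, is infeasible for some pair, so $d(s_i,t_i) \le T$ in $G \setminus W'$. Then there is a surviving $s_i$--$t_i$ shortest path $q$ with $d(q) \le T$ and $W' \cap q = \emptyset$. The one subtlety here is that $\mathcal{P}(s_i,t_i)$ was defined using \emph{simple} paths, so I must argue $q$ may be taken simple: a shortest path under a nonnegative length function $d$ contains no cycle (any cycle could be deleted without increasing length, as $d \geq 0$), so $q$ is simple and thus $q \in \mathcal{P}(s_i,t_i)$. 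But then $W' \cap q = \emptyset$ contradicts $W'$ being a hitting set for $\mathcal{P}$. Therefore $W'$ is feasible, and since $W^*$ is optimal, $c(W^*) \le c(W')$.

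Combining the two inequalities yields $c(W') = c(W^*)$. The main obstacle I anticipate is the simple-path technicality in the second direction: the hitting-set constraint is phrased only over simple paths $\mathcal{P}(s_i,t_i)$, yet feasibility of $W'$ concerns the shortest-path distance, which a priori might be realized by a non-simple walk. The resolution relies crucially on $d: E \to \mathbf{R}^+$ being nonnegative, which guarantees an optimal (shortest) $s_i$--$t_i$ path can always be chosen to be simple, so restricting attention to $\mathcal{P}(s_i,t_i)$ loses no generality. Everything else is a direct contrapositive argument, so once this point is handled the proof closes cleanly.
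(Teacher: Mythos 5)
Your proof is correct and follows essentially the same two-inequality argument as the paper: any feasible pseudocut must hit every short path (else a surviving path witnesses $d(s_i,t_i)\le T$), and conversely any hitting set of $\mathcal{P}$ is a feasible pseudocut, so the two minima coincide. The only difference is that you explicitly justify the simple-path technicality (nonnegativity of $d$ lets the shortest path be taken simple), a detail the paper's proof passes over with ``similarly''; this is a worthwhile addition but not a different approach.
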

\begin{proof}
  Since $W^*$ is a solution to the MULTI-PCUT
  problem, we have $d( u, v ) > T$ for all 
  $(u, v) \in \mathcal{S}$ after the removal of
  $W^*$. Any path $p$ in $G$ between a pair
  $(u,v) \in \mathcal{S}$ satisfying $d(p) \le T$ must
  therefore satisfy $p \cap W^* \neq \emptyset$, for
  otherwise $d(u,v) \le T$. Thus, $c(W') \le c(W^*)$.

  Similarly, the removal of $W'$ from $G$
  ensures $d(u,v) > T$ for all $(u,v) \in \mathcal{S}$,
  hence $c(W^*) \le c(W')$.
\end{proof}

As a consequence of Lemma \ref{lemm:cov_eq}, we can
formulate MULTI-PCUT as a covering integer program.
Consider the vertex set of $G$ to be $\{1, \ldots, n \}$.
Let $A^{(u,v)}_{p, i} = 1$ if vertex $i$ lies on path
$p \in \mathcal{P}(u, v)$, where $(u,v) \in \mathcal{S}$. 
If $i \not \in p$, let $A^{(u,v)}_{p, i} = 0$.
Also, let variable $w_i = 1$ if vertex $i$ is to be
chosen into the set of vertices $W$, 
and $0$ otherwise. Finally, denote the cost of choosing
vertex $i$ as $c_i$, and let vectors $w = (w_1, \ldots, w_n)$ and
$c = (c_1, \ldots, c_n)$.
Then,
the covering $0-1$ integer program formulation 
is as follows.

\begin{IP}[IP \ref{IP_mp}] \label{IP_mp}
\begin{align} 
  & \min c \cdot w \text{ such that } \nonumber \\
  &\sum_{i=1}^n A_{p,i}^{(u,v)} w_i \ge 1, \, \forall p \in P(u,v), \, \forall (u,v) \in \mathcal{S} \label{IP_cov}\\
  & w_i \in \{ 0, 1 \}, \, \forall i \in \{1, \ldots, n \} \label{IP_int}
\end{align}
\end{IP}

The constraints (\ref{IP_cov}) ensure that for each
path $p \in \mathcal{P}(u,v)$, we choose at least one node $i \in p$.
By Lemma \ref{lemm:cov_eq}, the optimal solution to IP \ref{IP_mp}
corresponds to an optimal solution of MULTI-PCUT. 
The linear relaxation of IP \ref{IP_mp} is designated
LP \ref{IP_mp}, in which each constraint (\ref{IP_int}) is
replaced by $w_i \in [0,1]$.
Finally, we remark 
that since PCUT is a special case of MULTI-PCUT,
IP \ref{IP_mp} and all solutions we discuss apply to PCUT as well.

\subsubsection{Discussion}
Notice that if we let $T$ become large enough, the classical
problems CUT and MULTI-CUT are recovered
from PCUT and MULTI-PCUT.

If $T$ is an input, IP \ref{IP_mp}
above is superpolynomial 
in size; there could be $n^T$ constraints (1);
The analogous integer program for MULTI-CUT
also could have exponentially  many constraints
but has a polynomial-time separation oracle that enables the 
linear relaxation to be solved in polynomial time by the
ellipsoid method \cite{Vazirani2001}. However, this separation oracle
does not work for the linear relaxation of IP 1;
in general,
the linear relaxation may not be solvable in polynomial time.
However, the IP formulations above hold when $T$ is
a constant. Thus, IP \ref{IP_mp} is polynomial in size
when T-MULTI-PCUT is considered.

Finally, notice that not all instances to PCUT admit a valid
solution; suppose as input a graph
consisting of a single edge $(s,t)$ is given.
PCUT is formulated to disallow choosing $s$ or $t$; hence, there is no
solution. Whether a feasible solution exists can easily be detected
in polynomial time, so unless otherwise stated, 
we assume that a feasible problem instance is given in our analysis.

\section{Computational complexity} \label{sect:complexity}
In this section, we present our results on the computational complexity of
the pseudocut problems. 

\subsection{T-PCUT} \label{sect:uniform_pcut} 
We give polynomial-time algorithms
for certain cases of the version of T-PCUT with 
uniform lengths. However, T-PCUT
with arbitrary edge lengths and uniform vertex costs 
is shown to be $NP$-hard. 
\begin{proposition} \label{prop:T3}
  For $T \le 3$, T-PCUT with uniform lengths and costs
  is solvable in polynomial time.
\end{proposition}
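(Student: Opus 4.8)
The plan is to invoke Lemma~\ref{lemm:cov_eq}, which (in the single-pair case) reduces T-PCUT to finding a minimum-cost vertex set hitting every simple $s$–$t$ path $p$ with $d(p) \le T$. Since lengths and costs are uniform, $d(p)$ equals the number of edges of $p$ and the objective is cardinality, so the task is to find a minimum-cardinality vertex set meeting every $s$–$t$ path with at most $T$ edges; as $T \le 3$, this means paths of at most three edges. First I would dispose of the short cases. A one-edge path is exactly the edge $(s,t)$; since $s,t$ are forbidden from the solution, the instance is infeasible precisely when $(s,t) \in E$, which is checkable immediately, and otherwise no one-edge path exists. Every two-edge path has the form $s \to v \to t$, whose only internal vertex is $v \in N^+(s) \cap N^-(t) =: M$; because no other vertex can hit such a path, every feasible solution must contain all of $M$. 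This already settles $T \le 2$: the answer is ``infeasible'' if $(s,t) \in E$ and $M$ otherwise.

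The crux is $T = 3$, where we must additionally hit every three-edge path $s \to u \to v \to t$. After forcing $M$ into the solution, any surviving such path has $u \in A := N^+(s) \setminus M$ and $v \in B := N^-(t) \setminus M$; since $M = N^+(s) \cap N^-(t)$, the sets $A$ and $B$ are disjoint. The key observation I would make is that these surviving three-edge paths correspond bijectively to the directed edges from $A$ to $B$ in $G$, and that such a path is hit exactly when $u$ or $v$ is removed. Hence hitting all remaining paths is precisely the problem of finding a minimum vertex cover of the bipartite graph $H$ with parts $A, B$ and edge set $\{(u,v) : u \in A,\, v \in B,\, (u,v) \in E\}$.

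I would then apply König's theorem: minimum vertex cover in a bipartite graph equals maximum matching and is computable in polynomial time via augmenting paths (equivalently, max-flow). The algorithm therefore outputs ``infeasible'' when $(s,t) \in E$, and otherwise returns $M$ together with a minimum vertex cover of $H$. For correctness I note that removing vertices never creates new paths, so hitting all original paths of at most three edges guarantees $d(s,t) > 3$. For optimality I would prove a matching lower bound: $M$ lies in every feasible solution, and because only $u$ and $v$ can hit the edge $(u,v)$ of $H$, any feasible solution must intersect $A \cup B$ in a vertex cover of $H$; as $M$ is disjoint from $A \cup B$, every solution has cost at least $|M|$ plus the minimum vertex cover of $H$, which the algorithm attains.

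The main obstacle is recognizing and justifying the bipartite structure in the $T = 3$ case---specifically, that $A$ and $B$ are genuinely disjoint and that no three-edge path survives outside the $A$-to-$B$ edges. This disjointness is exactly what keeps the residual hitting problem a \emph{bipartite} vertex cover (polynomial) rather than a general vertex-cover instance (where hardness would intrude), and it is precisely the property that one expects to fail once $T \ge 4$. Everything else is bookkeeping over the finitely many path lengths at most three and a standard polynomial-time matching computation.
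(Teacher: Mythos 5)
Your proof is correct and follows essentially the same route as the paper's: forced removal of the common neighbors of $s$ and $t$ to kill all $2$-edge paths, followed by a reduction of the residual $3$-edge-path hitting problem to bipartite vertex cover (the disjointness of your $A$ and $B$ is exactly the paper's observation that $X \cap Y = \emptyset$). Your treatment is in fact slightly more careful than the paper's, since you also handle feasibility when $(s,t) \in E$ and spell out the matching lower bound for optimality.
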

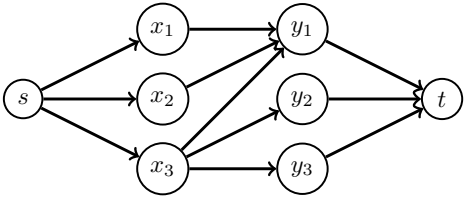
\begin{figure}
  \centering
  \resizebox{0.35\textwidth}{!}{
  \begin{tikzpicture}
\begin{scope}[every node/.style={circle,thick,draw}]
    \node (s) at (0,0) {$s$};
    \node (x1) at (2,1) {$x_1$};
    \node (x2) at (2,0) {$x_2$};
    \node (x3) at (2,-1) {$x_3$};
    \node (y1) at (4,1) {$y_1$};
    \node (y2) at (4,0) {$y_2$};
    \node (y3) at (4,-1) {$y_3$};
    \node (t) at (6,0) {$t$};
\end{scope}

\begin{scope}[every node/.style={fill=white,circle},
              every edge/.style={draw=black,very thick}]
  \path [->] (s) edge (x1);
  \path [->] (s) edge (x2);
  \path [->] (s) edge (x3);
  \path [->] (x1) edge (y1);
  \path [->] (x2) edge (y1);
  \path [->] (x3) edge (y1);
  \path [->] (x3) edge (y2);
  \path [->] (x3) edge (y3);
  \path [->] (y1) edge (t);
  \path [->] (y2) edge (t);
  \path [->] (y3) edge (t);
\end{scope}
\end{tikzpicture} }
\caption{An example of $G'$ in the
  analysis of Proposition \ref{prop:T3}.}

\label{fig:T=3}
\end{figure}
\begin{proof} 
  Let $G, (s,t)$ be an instance of T-PCUT.
  First consider the case $T = 2$. Since
  edge lengths are uniform, all paths $p$ of length $2$ from $s$
  to $t$ have exactly three vertices: $p = sxt$ for some $x \in V$.
  Therefore, no such paths can intersect unless they are identically
  equal. So to ensure $d(s,t) > 2$, one must simply remove all
  intermediate vertices between $s$ and $t$. 

  Next, suppose $T = 3$. Let $p_1 = sxyt$ be a path of length 3 
  from $s$ to $t$, and let $p_2$ be a path of length 2 that intersects
  $p_1$. In order to satisy $d(s,t) > 3$, $p_2$ must be broken, which
  can happen in only one way and
  necessarily breaks $p_1$ as well. Hence, in the first step we
  break all paths of length 2 in the same way as for the $T = 2$ case,
  and denote the modified graph as $G'$.
  The remaining paths of length 3 do not intersect paths of length 2. 
  Two distinct paths of length 3 can intersect each other in a maximum
  of one vertex. Let $X = \{x_1, x_2, x_3, \ldots \}$ be the set of all
  nodes that appear as the second node (after $s$) on a path of length 3; similarly,
  let $Y = \{y_1, y_2, y_3, \ldots \}$ be the set of nodes appearing as the third node
  on a path of length 3. Notice that $X \cap Y = \emptyset$, because otherwise
  a path of length 2 would still be extant in the graph, but all such paths were
  removed in the first step.

  Thus, the relevant subgraph $G'$ will appear of the form
  exemplified in
  Fig. \ref{fig:T=3}. Notice that an edge $(x_2, x_1)$ would have no relevance
  to the solution, as the only way to create a path of length 3 using
  $(x_2,x_1)$
  would be to add $(x_1,t)$ as well; but this process creates the path
  $sx_1t$, which is of length 2; so $x_1$ would have been chosen in the first
  step. If we delete $s$ and $t$ from the graph $G'$, we see that our
  problem reduces to a bipartite vertex cover problem, which is
  solvable in polynomial time; the second step will consist of the optimal
  solution to this problem.  The final solution is the union of vertices
  chosen in the first and second steps.
\end{proof}

\begin{proposition} \label{prop:delta}
  Let $D$ be a constant, 
  T-PCUT$( G, (s,t) )$ be an instance of T-PCUT for some
  constant $T$ with uniform lengths and uniform costs. 
  If the maximum degree
  $\delta$ in $G$ satisfies $\delta \le D$, then
  the optimal solution $W$ is computable in polynomial
  time.
\end{proposition}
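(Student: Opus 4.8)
The plan is to reduce the problem to a hitting-set instance via Lemma \ref{lemm:cov_eq} and then to observe that the bounded-degree hypothesis forces this hitting-set instance to have only constant size, so that it can be solved by exhaustive search in polynomial time.

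First I would recall that, since T-PCUT is the single-pair special case of MULTI-PCUT, Lemma \ref{lemm:cov_eq} shows that an optimal solution $W$ is precisely a minimum-cardinality set $W \subseteq V \setminus \{s,t\}$ meeting every simple $s$-$t$ path $p$ with $d(p) \le T$. Under uniform lengths, $d(p) \le T$ means $p$ has at most $T$ edges, hence at most $T+1$ vertices and at most $T-1$ internal vertices. Write $\mathcal{P} = \mathcal{P}(s,t)$ for the collection of these short simple paths. The key step is to bound $|\mathcal{P}|$ and the number of vertices it covers. Building a simple path outward from $s$, each successive vertex has at most $\delta \le D$ out-neighbors, so there are at most $D^\ell$ simple paths of length exactly $\ell$ starting at $s$; summing over $\ell = 1, \ldots, T$ gives $|\mathcal{P}| \le \sum_{\ell=1}^T D^\ell = O(D^T)$. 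Because $D$ and $T$ are both constants, $|\mathcal{P}|$ is bounded by a constant, and therefore the universe $U = \bigcup_{p \in \mathcal{P}} \bigl( p \setminus \{s,t\} \bigr)$ of candidate vertices for $W$ satisfies $|U| \le |\mathcal{P}| \cdot (T-1) = O(D^T T)$, again a constant. The family $\mathcal{P}$ itself is produced by a depth-bounded search from $s$ that follows out-edges only to depth $T$; since at most $O(D^T)$ partial paths are ever extended, this enumeration runs in time polynomial in the size of $G$ (indeed, constant for fixed $D$ and $T$).

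Finally, since any feasible $W$ need only use vertices of $U$ and $|U|$ is constant, I would enumerate all $2^{|U|}$ subsets of $U$, test each in constant time for whether it meets every path in $\mathcal{P}$, and return a smallest subset that does (reporting infeasibility if none does, which recovers the degenerate cases such as a direct edge $(s,t)$). By Lemma \ref{lemm:cov_eq} this subset is an optimal solution, and the whole procedure is polynomial. The only nontrivial point is the counting argument bounding $|\mathcal{P}|$; once the relevant paths are confined to a constant-size vertex set, the otherwise $NP$-hard hitting-set step collapses into a trivial exhaustive search.
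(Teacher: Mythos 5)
Your proof is correct and follows essentially the same route as the paper's: bound the set of vertices lying on short $s$--$t$ paths by a constant $O(D^T)$ using the degree bound, then exhaustively search the $O(2^{D^T})$ subsets. The only cosmetic difference is that you check feasibility via the hitting-set characterization of Lemma \ref{lemm:cov_eq}, while the paper checks each candidate subset directly by recomputing $d(s,t)$; these are equivalent.
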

\begin{proof} 
Consider all distinct paths of length at most $T$ 
starting from $s$ and ending at $t$.  
The number of distinct vertices on these paths is 
$O( \delta^T ) = O( D^T )$; let us call this set $V'$. 
Therefore, the number of possible subsets of these vertices is a constant
bounded by $O( 2^{D^T} )$. Since each subset
can be checked in polynomial time, the optimal solution can
be found by checking each possible subset of $V'$.
\end{proof}

\begin{figure}
  \centering
  \begin{tikzpicture}
\begin{scope}[every node/.style={circle,thick,draw}]
    \node (A) at (0,0) {$u_1$};
    \node (A1) at (0,-2) {$v_1$};
    \node (B) at (2.5,0) {$u_2$};
    \node (B1) at (2.5,-2) {$v_2$};
    \node (C) at (5,0) {$u_n$};
    \node (C1) at (5,-2) {$v_n$};
    \node (D) at (7.5,0) {$u_{n+1}$} ;
\end{scope}

\begin{scope}[every node/.style={fill=white,circle},
              every edge/.style={draw=black,very thick}]
    \path [->] (A) edge node {$e_1$} (B);
    \path [->] (A) edge node {$f_1$} (A1);
    \path [->] (A1) edge node {$g_1$} (B);
    \path [->] (B) edge node {$\ldots$} (C);
    \path [->] (B) edge node {$f_2$} (B1);
    \path [->] (B1) edge node {$\ldots$} (C);
    \path [->] (C) edge node {$e_n$} (D);
    \path [->] (C) edge node {$f_n$} (C1);
    \path [->] (C1) edge node {$g_n$} (D);

\end{scope}
\end{tikzpicture}
  \caption{The construction for the reduction from Knapsack
    to Pseudocut} \label{fig:NP_hard}
\end{figure}
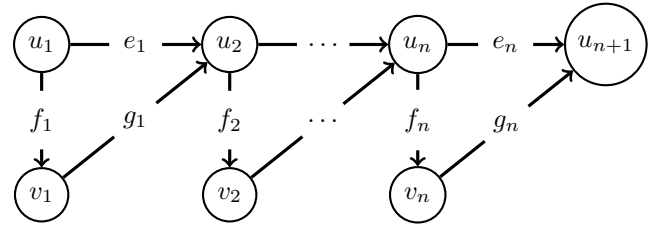
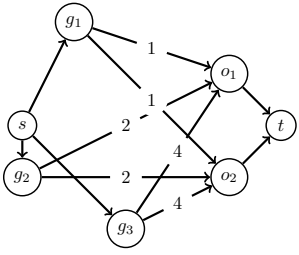
\begin{figure}
  \centering
    \resizebox{0.22\textwidth}{!}{%
      \begin{tikzpicture}
        \begin{scope}[every node/.style={circle,thick,draw}]
          \node (s) at (0,0) {$s$};
          \node (g1) at (1,2) {$g_1$};
          \node (g2) at (0,-1) {$g_2$};
          \node (g3) at (2,-2) {$g_3$};
          \node (o1) at (4, 1) {$o_1$};
          \node (o2) at (4, -1) {$o_2$};
          \node (t) at (5,0) {$t$};
        \end{scope}

        \begin{scope}[every node/.style={fill=white,circle},
            every edge/.style={draw=black,very thick}]
          \path [->] (s) edge (g1);
          \path [->] (s) edge (g2);
          \path [->] (s) edge (g3);
          \path [->] (o1) edge (t);
          \path [->] (o2) edge (t);
          \path [->] (g3) edge node {$4$} (o1);
          \path [->] (g3) edge node {$4$} (o2);
          \path [->] (g2) edge node {$2$} (o1);
          \path [->] (g2) edge node {$2$} (o2);
          \path [->] (g1) edge node {$1$} (o1);
          \path [->] (g1) edge node {$1$} (o2);
        \end{scope}
    \end{tikzpicture} }


  \caption{ 
    Construction for $k=3$ 
    showing tightness of the ratio of Alg. \ref{alg:multipcut}.
    Numbers on certain edges indicate the number of
    disjoint paths of length $2$ between the corresponding nodes.
    Thus, there exactly $8$ paths $p$ from $s$ to $t$ passing through
    node $g_3$ satisfying $d(p) < 5$.}     \label{fig:tightness}
\end{figure}
\begin{theorem} \label{thm:NPc}
  Consider the decision version of 1-PCUT with uniform costs
  and arbitrary lengths; that is, given problem instance
  1-PCUT$(G, (s,t))$ with uniform costs and arbitrary lengths,
  and given constant $D > 0$, determine if a solution
  $W \subset V$ exists with $|W| \le D$. This problem
  is NP-complete.
\end{theorem}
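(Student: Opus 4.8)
The plan is to prove membership in $NP$ and then give a polynomial-time reduction from the $NP$-complete Knapsack problem, following the gadget idea sketched in Fig.~\ref{fig:NP_hard}. Membership in $NP$ is immediate: given a candidate $W$ with $|W| \le D$, I would delete $W$ from $G$ and run Dijkstra's algorithm (valid since all lengths are positive) to compute $d(s,t)$ in polynomial time, accepting if and only if $d(s,t) > 1$, with the convention $d(s,t) = \infty$ when $s$ and $t$ become disconnected. Thus a candidate solution can be verified efficiently.

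For hardness, I would reduce from Knapsack: given items $1,\dots,n$ with positive integer weights $w_i$ and values $v_i$, capacity $C$, and target $V$, decide whether some $S \subseteq \{1,\dots,n\}$ satisfies $\sum_{i\in S} w_i \le C$ and $\sum_{i\in S} v_i \ge V$. I would build a chain of junctions $u_1,\dots,u_{n+1}$ with $s=u_1$ and $t=u_{n+1}$, one gadget per item. In gadget $i$ I would place $w_i$ internally disjoint \emph{short} routes from $u_i$ to $u_{i+1}$, each through a distinct removable vertex and carrying negligible length $\epsilon$, together with a \emph{long} route of length $v_i/(V-\tfrac12)$, so that lengths are scaled to the fixed threshold $T=1$. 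Setting the budget $D=C$, forcing gadget $i$ onto its long route requires deleting all $w_i$ of its short-route vertices, at unit-cost total $w_i$ (so item weights are encoded by multiplicities of unit-cost vertices, consistent with uniform costs and arbitrary lengths); leaving any short route open keeps that gadget's contribution at $\epsilon$. Hence, up to the $n\epsilon$ slack, a solution that fully blocks exactly the gadget set $S$ yields $d(s,t)=\sum_{i\in S} v_i/(V-\tfrac12)$, which exceeds $1$ precisely when $\sum_{i\in S} v_i \ge V$.

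The correctness argument then has two directions, both mediated by Lemma~\ref{lemm:cov_eq}. A feasible Knapsack set $S$ gives the pseudocut consisting of the short-route vertices of the gadgets in $S$, of size $\sum_{i\in S}w_i\le C=D$ and with $d(s,t)>1$; conversely, since partially blocking a gadget is wasteful---it enlarges $W$ without lifting that gadget's contribution above $\epsilon$---any solution of size $\le D$ may be assumed to fully block a set $S$ of gadgets, and its feasibility forces $\sum_{i\in S}v_i\ge V$ while $\sum_{i\in S}w_i\le C$. The step I expect to be the main obstacle is ruling out cheap ``disconnection'' shortcuts that would trivialize the instance: deleting a single junction $u_i$ already disconnects $s$ from $t$ and achieves $d(s,t)>1$ at cost $1$. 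I would neutralize this by replacing each junction with a cluster of $M>C$ parallel copies, cross-connected to the adjacent routes (and giving the long routes multiplicity $M$ as well), so that every $s$--$t$ vertex cut has size exceeding the budget $D=C$; the only in-budget way to raise $d(s,t)$ above $1$ is then the Knapsack-style selection. The remaining care is the routine scaling and perturbation needed to turn Knapsack's non-strict $\sum_{i\in S}v_i\ge V$ into the strict requirement $d(s,t)>1$, which is handled by choosing $\epsilon$ small relative to $1/(n(V-\tfrac12))$ so that the cumulative slack never flips an integer comparison.
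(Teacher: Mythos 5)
Your reduction is essentially the paper's: a chain of Knapsack gadgets in which item weights become multiplicities of unit-cost removable vertices on short routes, item values become lengths of forced detours, and junctions are replicated (your clusters of $M>C$ copies playing the role of the paper's cliques of $W+1$ vertices) to rule out cheap disconnection within budget. If anything you are slightly more careful than the paper on the strict inequality $d(s,t)>1$ (via the $V-\tfrac{1}{2}$ scaling and the $\epsilon$-slack bookkeeping), and both constructions share the same implicit caveat that item weights are encoded in unary as vertex multiplicities.
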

\begin{proof} 
For clarity, we first 
prove the theorem for the edge version of
1-PCUT (where edges $e \in G$ have both
cost and length functions), with arbitrary costs of edges;
next, we discuss how to modify the proof for the uniform
cost function and the vertex version of PCUT.
The decision problem is
clearly in $NP$.
To show $NP$-hardness, we first reduce
the Knapsack problem to an instance of
Pseudocut with non-uniform costs; then
we discuss how to modify the reduction
for uniform costs. A problem
instance of Knapsack is specified as follows. Let
$S = \{ a_1, \ldots, a_n \}$ be a set of objects
with sizes $w(a_i) \in \mathbf{Z}^+$ and profits
$p( a_i ) \in \mathbf{Z}^+$, and a ``knapsack capacity''
$W$, and desired profit $P$. The decision version of 
the problem is to find a subset of objects with
total profit at least $P$ and total size bounded
by $W$.

Given a Knapsack instance, we construct an instance of
the pseudocut problem in the following way. For each item
$a_i$, we add nodes $u_i, v_i$ and edges $e_i = (u_i, u_{i + 1})$,
$f_i = (u_i, v_i)$, and $g_i = (v_i, u_{i + 1})$. We also
set the following cost and $d$ values:
$c(e_i) := w( a_i )$, $d( e_i ) := 0$, $c( f_i ) := \infty$,
$d(f_i) := 0$, $c( g_i ) := \infty$, and $d( g_i ) := p( a_i ) / P$.
Fig \ref{fig:NP_hard}
illustrates this construction.

Then, letting $s = u_1$, $t = u_{n + 1}$, we have an instance
of the 1-PCUT, the decision version of which is 
whether there exists a set of edges of total cost at most $W$
such that $d(s, t) \ge 1$. Notice that including edge $e_i$
into a solution incurs cost $c ( a_i )$ and adds $p ( a_i ) / P$
to $d(s,t)$. Furthermore, edges $f_i$ and $g_i$ will not be chosen since
these edges have infinite cost. So choosing edge $e_i$
exactly corresponds to adding item $a_i$ into the knapsack, and
solutions to the Knapsack instance and the Pseudocut
instance are in one-to-one correspondence, with corresponding
solutions having the same cost. Also, $d(s,t) \ge 1$ iff
the corresponding solution to the Knapsack problem 
has profit at least $P$.

\emph{Modification for vertex version:}
To obtain the $NP$-hardness of the uniform cost vertex 1-PCUT
problem, we discuss how to modify the above reduction.
The first modification is to replace each vertex in the construction
with a clique of $W + 1$ vertices. Edges $f_i$ and $g_i$ are replaced by
$W + 1$ edges matching clique $v_1$ with $u_1$ and with $u_2$, respectively.
Instead of a single edge $e_i$ we add $c(a_i)$ vertices $w_{ij}$ between
$u_i$ and $u_{i+1}$, connecting each vertex in cliques $u_i, u_{i+1}$ to each $w_{ij}$.
Distinct nodes $s, t$ are added and $s$ is connected to each vertex in first clique $u_1$,
and $t$ to each node in clique $u_n$.
Thus, in order to add $p(a_i) / P$ to the distance $d(s,t)$, it is necessary to pick all
$c(a_i)$ vertices $w_{ij}$.
\end{proof}

\subsection{T-MULTI-PCUT} \label{sect:uniform_multi}
In this section, we show uniform length and cost T-MULTI-PCUT to be inapproximable within
a factor of $1.3606$.

\begin{theorem} \label{thm:nphard-multipcut} Let
  $T \ge 1$. Consider
  the decision version of T-MULTI-PCUT with uniform lengths
  and costs; that is, given
  problem instance
  T-MULTI-PCUT$(G, \mathcal{S})$ with uniform lengths and costs,
  determine if a solution $W \subset V$ exists with
  $|W| \le D$. This problem is NP-complete.
\end{theorem}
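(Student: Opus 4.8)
The plan is to establish membership in $NP$ and then prove $NP$-hardness via an approximation-preserving reduction from the minimum vertex cover problem; the same reduction will later yield the claimed $1.3606$ inapproximability once combined with the known hardness of vertex cover.

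For membership, given a candidate set $W$, I would delete $W$ from $G$ and, for each pair $(s_i,t_i) \in \mathcal{S}$, compute the shortest-path distance $d(s_i,t_i)$ by a single-source shortest-path computation (breadth-first search suffices since lengths are uniform) and verify that it exceeds $T$. This check runs in time polynomial in $|V|$, $|E|$, and $k$, so the problem lies in $NP$.

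For hardness, let $(H, D)$ be an instance of vertex cover with $H = (V_H, E_H)$. I would build a T-MULTI-PCUT instance by taking $G$ to be $H$ with each edge oriented (say $\{u,v\} \mapsto u \to v$), with all lengths and costs equal to $1$, and setting $\mathcal{S} = \{(u,v) : \{u,v\} \in E_H\}$ and the same bound $D$. The crucial observation is that because the edge $u \to v$ has length $1$, we have $d(u,v) = 1 \le T$ as long as both endpoints remain, whereas deleting either $u$ or $v$ forces $d(u,v) = \infty > T$. Hence a set $W$ is a feasible T-MULTI-PCUT solution if and only if every edge of $H$ has an endpoint in $W$, i.e.\ iff $W$ is a vertex cover of $H$; moreover the correspondence preserves cardinality. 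Thus a vertex cover of size $\le D$ exists iff a T-MULTI-PCUT solution of size $\le D$ exists, which completes the reduction and shows $NP$-completeness.

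The one point that requires care is that the argument must hold uniformly for every fixed $T \ge 1$, and this is exactly what the direct-edge device buys us: the single edge pins $d(u,v)$ at $1$ regardless of $T$, and deleting an endpoint sends the distance to infinity, so any threshold $T \ge 1$ behaves identically and no padding of the paths is needed. The tempting alternative of subdividing each edge into a path of length $T$ with $u,v$ adjacent in the middle and gadget-private internal vertices (forcing $d = T$ before deletion) would instead run into the obstacle of ruling out alternative short paths through the shared vertices, which the direct-edge construction avoids entirely. Finally, since this reduction is an exact, cost-preserving bijection between vertex covers of $H$ and feasible solutions, it is approximation preserving, so the Dinur--Safra $1.3606$ inapproximability of vertex cover transfers to uniform-length, uniform-cost T-MULTI-PCUT.
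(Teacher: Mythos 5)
Your proof is correct and follows essentially the same route as the paper: an approximation-preserving reduction from vertex cover in which $\mathcal{S}$ is taken to be the edge set of $H$ and each targeted pair is joined by a direct length-$1$ edge, so that $d(u,v)=1\le T$ unless an endpoint is deleted. The only cosmetic difference is that the paper takes $G$ to be the complete graph on $V(H)$ rather than $H$ itself; both choices work for the identical reason, and both yield the cost-preserving bijection needed for the $1.3606$ inapproximability corollary.
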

\begin{proof}
  The feasibility of a solution $W$ satisfying $|W| \le D$ can
  easily be checked in polynomial time, so T-MULTI-PCUT $\in NP$.
  We give an approximation-preserving reduction \cite{Vazirani2001}
  from the vertex cover problem to T-MULTI-PCUT.
  Let $H$ be an instance of the vertex cover problem;
  let the vertex set of $H$ be $V = \{ 1, 2, \ldots, n \}$. An instance
  of T-MULTI-PCUT s constructed as follows.
  Let $G$ be a complete graph on $\{1, 2, \ldots, n \}$,
  and $\mathcal{S}$ be the edge set of $H$.

  Then, there is a natural one-to-one, cost-preserving
  correspondence
  between solutions of the two instances; namely the identity
  mapping: if $W \subset V$ is a vertex cover of size $l$, 
  $W$ is also a feasible solution to the T-MULTI-PCUT instance
  of size $l$,
  since $(u,v) \in \mathcal{S}$ implies $(u,v) \in H$, which
  implies $u \in W$ or $v \in W$ since $W$ is a vertex cover,
  which finally implies $d(u,v) = \infty > T$ in $G$ (by the
  convention discussed in Section \ref{sect:prob}). 
  If $W \subset V$ is a solution to T-MULTI-PCUT, then
  for each $(u,v) \in H$, $d(u,v) = \infty$ after removal
  of $W$. Since the edge $(u,v)$ is in $G$, $u$ or $v$ is 
  in $W$, so that $W$ is a vertex cover.
\end{proof}
\begin{corollary} Unless $P = NP$, there is no 
  polynomial-time approximation to uniform length, cost
  T-MULTI-PCUT within a factor of $1.3606$, for $T \ge 1$.
\end{corollary}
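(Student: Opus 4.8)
The plan is to derive this corollary directly from Theorem \ref{thm:nphard-multipcut} by observing that the reduction constructed there is not merely a decision reduction but an \emph{approximation-preserving} one, and then to combine it with the known inapproximability of the minimum vertex cover problem. The celebrated result of Dinur and Safra establishes that, unless $P = NP$, vertex cover admits no polynomial-time approximation within a factor of $10\sqrt{5} - 21 \approx 1.3606$. Since this is precisely the threshold appearing in the statement, the entire task reduces to transferring that hardness through the reduction.

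First I would recall the correspondence exhibited in the proof of Theorem \ref{thm:nphard-multipcut}: given a vertex cover instance $H$ on vertex set $\{1,\ldots,n\}$, one forms the complete graph $G$ on the same vertex set and lets $\mathcal{S}$ be the edge set of $H$. The identity mapping sends a set $W \subseteq V$ to itself, and the proof already shows that $W$ is a vertex cover of $H$ if and only if $W$ is a feasible solution to the T-MULTI-PCUT instance $(G,\mathcal{S})$. Because both problems here use uniform costs, the cost of $W$ equals $|W|$ in each instance, so this bijection is cost-preserving. The key consequence I would record explicitly is that the optimal values coincide: $\mathrm{OPT}_{\mathrm{VC}}(H) = \mathrm{OPT}_{\text{T-MULTI-PCUT}}(G,\mathcal{S})$, and moreover the construction of $(G,\mathcal{S})$ from $H$ is computable in polynomial time.

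The transfer of hardness then follows by contraposition. Suppose, toward a contradiction, that some polynomial-time algorithm $\mathcal{A}$ approximates uniform-length, uniform-cost T-MULTI-PCUT within ratio $1.3606$. Given any vertex cover instance $H$, I would build $(G,\mathcal{S})$ in polynomial time, run $\mathcal{A}$ to obtain a feasible $W$ with $|W| \le 1.3606 \cdot \mathrm{OPT}_{\text{T-MULTI-PCUT}}(G,\mathcal{S})$, and then read $W$ back as a vertex cover of $H$. By the equality of optima, this $W$ is a vertex cover with $|W| \le 1.3606 \cdot \mathrm{OPT}_{\mathrm{VC}}(H)$, i.e.\ a polynomial-time $1.3606$-approximation for vertex cover, contradicting the Dinur--Safra bound unless $P = NP$.

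Most of the substance is already contained in Theorem \ref{thm:nphard-multipcut}, so I do not expect a genuine obstacle; the only point requiring care is to make fully explicit that the identity correspondence preserves feasibility \emph{in both directions} and preserves cost, so that an approximate solution for one instance maps to an equally good approximate solution for the other. This is exactly what upgrades the decision-level reduction of the theorem into an approximation-preserving one, and it is the single ingredient that must be stated cleanly for the corollary to follow.
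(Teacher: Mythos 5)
Your proposal is correct and follows exactly the paper's route: the paper's proof of the corollary simply invokes the approximation-preserving (identity, cost-preserving) correspondence already established in the proof of Theorem \ref{thm:nphard-multipcut} together with the Dinur--Safra $1.3606$ inapproximability of vertex cover. You merely spell out the contrapositive transfer of hardness in more detail than the paper does.
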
 
\begin{proof}
  This corollary follows from
  the proof of Theorem \ref{thm:nphard-multipcut} and
  the inapproximability of vertex cover \cite{Dinur2005}.
\end{proof}

\section{Approximation algorithms} \label{sect:approx_algs}
In this section, we present three approximation algorithms for
arbitrary vertex cost T-MULTI-PCUT, when the length function on edges
is bounded below: $d(e) > q_{min}$ for some constant $q_{min} > 0$.
In this case, we call the edge lengths \emph{bounded}.
Recall from Section \ref{sect:motivation} that edge
lengths are bounded
when the edge length function $d$ is an additive QoS
metric.
For the case of bounded edge lengths, we let constant $T_0 = T / q_{min}$. 
If the length function is uniform, then of course $T_0 = T$.
For bounded edge length, arbitrary vertex cost T-MULTI-PCUT, 
we present GEN, an $O( \log n )$-approximation algorithm, and FEN, a
$(T_0 + 1)$-approximation algorithm in Section \ref{sect:covering_approx}.
Although these algorithms run in polynomial time since $T_0$ is constant,
their running time may suffer if $T_0$ is large for some application.
Hence, we also present a randomized algorithm with probabilistic
performance guarantee in Section \ref{sect:gest}, capable of running
efficiently even for large $T_0$.


\subsection{Approximations for T-MULTI-PCUT} \label{sect:covering_approx}
First, we present two approximation algorithms
for the constant $T$ problems T-PCUT and T-MULTI-PCUT, based upon 
Lemma \ref{lemm:cov_eq} and IP \ref{IP_mp}, when edge lengths have
a lower bound $q_{min} > 0$.
The idea
is as follows: 
for each path of vertices $p = v_1 \ldots v_l$  
between a pair of the target set
$\mathcal{S}$ with $d(p) = \sum_{i=1}^l d(v_{i-1}, v_i) \le T$, we must select at least one 
node belonging to the path into
the solution. Thus, we formulate the problem 
into a covering framework, where each node covers a subset of
paths. Both algorithms require the following
enumeration of paths.

\subsubsection{Path enumeration} \label{sect:path_enum}
This enumeration can be accomplished in polynomial-time in 
the following way: let
$T_0 = T / q_{min}$; then each path 
$p \in \mathcal{P} = \bigcup_{(u,v) \in \mathcal{S}} \mathcal{P}(u,v)$ must have at most
$T_0 + 1$ nodes. Thus, we may
iterate through all sequences of nodes of length at most $T_0$,
and test if the path produced is in $\mathcal{P}$; that is, for
some $(u,v) \in \mathcal{S}$,
the path must start at $u$, terminate at $v$, and satisfy $d(p) < T$. 
This procedure can be accomplished in time $O( n^{T_0} )$. 
Using these paths, we can construct the matrices $A^{(u,v)}$ in
IP \ref{IP_mp}.

\subsubsection{$O( \log n )$-approximation}
The first approximation algorithm for MULTI-PCUT
is given in Alg. \ref{alg:multipcut}. The general
approach is as follows. After the enumeration of all
paths in $\mathcal{P}$, the algorithm greedily selects
the node that intersects the largest number of paths
normalized by the vertex cost
until all paths in $\mathcal{P}$ have been covered.
By the proof of Lemma \ref{lemm:cov_eq}, when all such paths in 
$\mathcal{P}$ are covered, we have a feasible solution $W$.

An explicit description of the algorithm is given in
Alg. \ref{alg:multipcut}.
In lines 1 -- 3, the enumeration described above is performed.
Next, the algorithm initializes $W$, the set of vertices chosen,
and $C$, the set of paths covered by $W$ to $\emptyset$ in
line 4. The while loop on line 5 tests whether any paths
satisfying $d(p) \le T$ still exist in the network.
If so, it chooses the node $i^*$ which covers the most
such extant paths into the set $W$ on line 11 and updates
$C$ accordingly on line 12.

\begin{algorithm}[ht]
 \KwIn{Instance $(G, \mathcal{S},c,d)$ of T-MULTI-PCUT.}
 \KwOut{Subset $W \subset V$}
 \ForEach{$(u,v) \in \mathcal{S}$}{ 
   Compute $\mathcal{P}(u,v)$ explictly 
   as described in Section \ref{sect:path_enum}.
 }
 
 $C = \emptyset$,
 $W = \emptyset$\;
 \While{$\exists \, (u,v)$, $\mathcal{P}(u,v) - C \neq \emptyset$}{
   \For{$i \in V$}{
     $P_i = \{ p \in \mathcal{P} : i \in p$ and $p \cap W = \emptyset \}$ \;
     $b_i = |P_i| / c_i$\;
   }

   $i^* = \argmax \{ b_i \}$\;
   $W = W \cup \{i^* \}$\;
   $C = C \cup P_i$ \;
 }

 \caption{GEN: A Greedy, ENumerative, $O(\log n)$-approximation algorithm for MULTI-PCUT}
 \label{alg:multipcut} 
\end{algorithm}

\begin{theorem}
  Alg. \ref{alg:multipcut} achieves a performance guarantee
  of $O(\log n)$ with respect to the optimal solution
  with running time bounded by $O( k n^{T_0} )$.
  Furthermore, for each $n$, there exists an instance
  of the single pair PCUT problem where
  Alg. \ref{alg:multipcut} returns a solution of 
  cost greater than a factor 
  $\Omega ( \log n )$ of the optimal. \label{thm:GEN}
\end{theorem}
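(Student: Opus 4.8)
The plan is to recognize that GEN is exactly the greedy algorithm for weighted set cover applied to the covering formulation of Lemma \ref{lemm:cov_eq}. By that lemma (and IP \ref{IP_mp}), an optimal MULTI-PCUT solution is a minimum-cost hitting set for the enumerated path collection $\mathcal{P} = \bigcup_{(u,v)\in\mathcal{S}} \mathcal{P}(u,v)$. I would cast this as set cover by taking the universe to be $\mathcal{P}$ and, for each vertex $i$, the set $S_i = \{ p \in \mathcal{P} : i \in p \}$ of cost $c_i$; a vertex set is feasible iff the corresponding $S_i$ cover $\mathcal{P}$. At each step Alg. \ref{alg:multipcut} selects the vertex maximizing $b_i = |P_i|/c_i$, the number of still-uncovered paths through $i$ per unit cost, which is precisely the greedy set cover rule. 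Hence the classical analysis applies: the returned cost is at most $H_m$ times the optimum, where $m$ is the number of elements and $H_m = \sum_{j=1}^m 1/j = O(\log m)$. Here $m = |\mathcal{P}| \le k\,n^{T_0}$, since each of the $k$ pairs contributes at most $n^{T_0}$ paths; because $T_0$ is constant and $k = O(n^2)$, I get $H_m = O(\log(k\,n^{T_0})) = O(T_0 \log n + \log k) = O(\log n)$, which establishes the approximation guarantee.

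For the running time, the dominant cost is the path enumeration of lines 1--3, which costs $O(k\,n^{T_0})$ by the analysis of Section \ref{sect:path_enum}. For the greedy phase I would maintain the coverage counts $|P_i|$ incrementally: each path is covered exactly once, and when it is first covered I decrement the count of each of its at most $T_0 + 1$ vertices, so the total update work is $O(T_0 |\mathcal{P}|) = O(k\,n^{T_0})$; selecting the maximum in each of the at most $n$ iterations folds into the same order using a bucket queue on the integer counts. This yields the stated bound $O(k\,n^{T_0})$.

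For the tightness claim I would exhibit the single-pair family of Fig. \ref{fig:tightness}, parameterized by an integer $k$ (the parameter of that figure). The instance has vertices $s, t, o_1, o_2$ and $g_1, \dots, g_k$; for each $i$ I attach $2^{i-1}$ internally vertex-disjoint length-$2$ paths from $g_i$ to $o_1$ and another $2^{i-1}$ from $g_i$ to $o_2$, together with edges $s \to g_i$ and $o_j \to t$, and I fix $T = 5$ so that every $s$--$t$ path $s \to g_i \rightsquigarrow o_j \to t$ has length $4 \le T$. Then $g_i$ lies on $2^{i}$ such paths and each $o_j$ lies on $\sum_{i=1}^k 2^{i-1} = 2^k - 1$. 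With uniform costs, greedy first selects $g_k$ (coverage $2^k > 2^k - 1$); after removing the $2^k$ paths through $g_k$, each $o_j$ retains only $2^{k-1}-1$ uncovered paths while $g_{k-1}$ still covers $2^{k-1}$, so $g_{k-1}$ is chosen next, and inductively $g_k, g_{k-1}, \dots, g_1$ are all selected, at total cost $k$. In contrast $\{o_1, o_2\}$ is feasible, since removing both leaves $t$ unreachable and hence $d(s,t) = \infty > T$, so the optimum is at most $2$ and the ratio is at least $k/2$. Counting the intermediate vertices gives $n = \sum_{i=1}^k 2\cdot 2^{i-1} + k + 4 = \Theta(2^k)$, so $k = \Theta(\log n)$ and the ratio is $\Omega(\log n)$.

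I expect the main obstacle to be the tightness construction rather than the upper bound, which is a direct instantiation of greedy set cover. The delicate points are (i) calibrating the path multiplicities $2^{i-1}$ so that at every step the relevant $g$-vertex strictly beats both the $o$-vertices and the coverage-$1$ intermediate vertices, forcing greedy to choose all $k$ of the $g_i$ before it terminates, and (ii) verifying the vertex count $n = \Theta(2^k)$ so that the number of greedy choices is genuinely $\Theta(\log n)$. Both reduce to the elementary identity $\sum_{i=1}^{j} 2^{i-1} = 2^{j}-1 < 2^{j}$ used above, so once the gadget is set up the verification is routine.
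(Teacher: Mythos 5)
Your proposal is correct and follows essentially the same route as the paper: the $O(\log n)$ bound via the greedy weighted set-cover analysis applied to the covering formulation of Lemma \ref{lemm:cov_eq} with $|\mathcal{P}| = O(k\,n^{T_0})$ elements, and the lower bound via the very gadget of Fig.~\ref{fig:tightness} with $2^{i-1}$ disjoint length-$2$ paths per $g_i$, forcing greedy to pick $g_k,\dots,g_1$ while $\{o_1,o_2\}$ is optimal. Your write-up is in fact more explicit than the paper's on the inductive coverage counts ($2^i$ vs.\ $2^i-1$) and the vertex count $n=\Theta(2^k)$, which the paper leaves to the reader.
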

\begin{proof}
  The performance ratio of $O( \log n )$ follows from
  the fact that IP \ref{IP_mp} is a covering integer
  program corresponding to the set cover problem
  with at most $O( n^{T_0 + 1} )$ elements (the paths)
  for which the greedy algorithm has the
  ratio $O( T_0 \log n )$ \cite{Vazirani2001}. 

  Next, we construct a tight example 
  for Alg. \ref{alg:multipcut}; which holds
  even in the case of the single pair T-PCUT,
  for $T = 5$.
  At the beginning of the construction, $G$
  contains two isolated nodes, $s, t$. Add nodes
  $g_1, \ldots, g_k$ and edges $(s, g_i)$ for each
  $g_i$. Next, add nodes $o_1, o_2$ to the graph,
  along with edges $(o_1, t), (o_2, t)$. Then,
  for each $g_i$, add $2^{i - 1}$ disjoint paths
  of length 2 between $g_i$ and $o_1$, and similar
  paths between $g_i$ and $o_2$. Let $d(u,v) = 1$
  for all edges in $G$. For $k = 3$, see
  Fig. \ref{fig:tightness} in the Appendix for a depiction of
  the construction. 
  Then Alg. \ref{alg:multipcut} will select nodes
  $g_k, \ldots, g_1$ in that order, while the optimal
  solution is $\{ o_1, o_2 \}$.
\end{proof}
\subsubsection{$(T_0 + 1)$-approximation} 
Next, we present FEN in Alg. \ref{alg:FEN}, 
a frequency-based rounding 
algorithm for LP \ref{IP_mp}.
FEN first enumerates $\mathcal{P}$
and constructs LP \ref{IP_mp}. In this covering
program, each path intersects at most $T_0 + 1$ nodes,
as discussed above. Hence, the algorithm nexts 
solves LP \ref{IP_mp} to obtain optimal fractional solution
$\bar{w}$. Next, an integral solution $\hat{w}$ is obtained by rounding
\begin{equation} \label{eq:hatw}
\hat{w}_{i} = \begin{cases} 1 & \bar{w}_i \ge \frac{1}{T_0 + 1} \\ 0 & \text{otherwise} \end{cases}
\end{equation}
That $\hat{w}$ is a feasible solution follows from the fact that 
for each $(u,v) \in \mathcal{S}$ and $p \in \mathcal{P}(u,v)$,
constraint   
$\sum_{i=1}^n A_{p,i}^{(u,v)} \bar{w}_i \ge 1,$
so at least one $\bar{w}_i$ in the sum must satisfy $\bar{w}_i \ge 1 / (T_0 + 1)$,
since the sum has at most $T_0 + 1$ nonzero elements. Furthermore,
since the optimal fractional solution has cost at most
the cost of the optimal integral solution, and the cost of
$\hat{w}$ is within factor $T_0 + 1$ of $\bar{w}$, it follows
that FEN is an $( T_0 + 1 )$-approximation algorithm.
\begin{algorithm}[t]
 \KwIn{Instance $(G, \mathcal{S},c,d)$ of T-MULTI-PCUT.}
 \KwOut{Subset $W \subset V$}
 \ForEach{$(u,v) \in \mathcal{S}$}{ 
   Compute $\mathcal{P}(u,v)$ explictly 
   as described in Section \ref{sect:path_enum}.
 }
 
 Construct and solve LP \ref{IP_mp} to get fractional optimal solution $\bar{w}$\;
 Round $\bar{w}$ to $\hat{w}$ by Eq. \ref{eq:hatw}. Return $W = \hat{w}$\;
 \caption{FEN: A Frequency-based rounding, ENumerative, $(T_0 + 1)$-approximation 
   algorithm for MULTI-PCUT}
 \label{alg:FEN} 
\end{algorithm}

\subsection{Probabilistic approximation algorithm} \label{sect:gest}
In this section, we propose another approximation algorithm,
for T-PCUT and T-MULTI-PCUT when the length function is
bounded below. This algorithm, GEST, is intended
to more easily handle large values of $T_0$ than the
algorithms in the preceding section. The key for GEST is a procedure to efficiently estimate the number of paths between $(u,v)$ of length at most $T$ that each vertex $i \in V$ lies upon, which will guide the greedy selection of nodes. By theoretical analysis, we demonstrate that GEST is not only efficient, but also has a probabilistic performance guarantee. 

\subsubsection{Algorithm overview and key results} 
The GEST algorithm is detailed in Alg. \ref{alg:sample_greedy}. 
As an overview, GEST 
iteratively selects nodes for removal based upon 
its estimation procedure, until
the distance between all pairs $(u,v)\in \mathcal{S}$ 
exceeds $T$.  
Define $\tau(S),\tau_{uv}(S)$ as the number of paths
in $\cup_{(u,v)\in S}\mathcal{P}(u,v)$, $\mathcal{P}(u,v)$ that $S$ intersects, respectively and $\sigma(S)$, $\sigma_{uv}(S)$ as
corresponding estimators. 
From the definition, we have $\tau(S)=\sum_{(u,v)\in \mathcal{S}}\tau_{uv}(S)$ and $\sigma(S)=\sum_{(u,v)\in \mathcal{S}}\sigma_{uv}(S)$. 
In each iteration of GEST, the node that maximizes $\sigma(W \cup \{i\}), i\in V\backslash W$ will be added to $W$, the set of selected nodes. 
The details of the estimator $\sigma(S)$ and the path sampling method are
discussed in Sections \ref{sect:estimators} and \ref{sect:hq}, respectively.

In the following, we will prove Theorem \ref{thm:gestapprox}, which establishes the key results on the probabilistic approximation ratio and time complexity of GEST.  
Before the proof, we introduce Lemma \ref{lemm:samples} on the number of samples $L$ for each pair to guarantee the accuracy of $\sigma(S)$. The proof of Lemma \ref{lemm:samples} is provided in Section
\ref{sect:sample_number}. The parameter $\alpha$ in $L$ can be used to balance running time and accuracy of the algorithm.

\begin{lemma}\label{lemm:samples}
Let the number of paths sampled for each $(u,v)\in\mathcal{S}$ be at 
least $L=3k^2\log ( 2 n^2 ) / 2\alpha^2$. 
Then, given a set $S\subset V$ and $\delta$ as the maximum degree in $G$, the inequality $|\tau(S)-\sigma(S)|< \alpha \delta^{T_0}$ holds with probability at least $1-1/n^3$. 
\end{lemma}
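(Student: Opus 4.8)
The plan is to establish the high-probability bound by treating each pair $(u,v)\in\mathcal{S}$ separately, applying a concentration inequality to the per-pair estimator $\sigma_{uv}(S)$, and then combining the individual bounds via a union bound over all pairs. The key observation is that $\sigma_{uv}(S)$ is built from $L$ sampled paths: each sampled path either is intersected by $S$ or is not, so $\sigma_{uv}(S)$ is essentially an empirical mean (suitably rescaled) of $L$ independent Bernoulli trials whose expectation equals $\tau_{uv}(S)$ normalized by the total path count. Because the estimator is an average of bounded independent indicator variables, the natural tool is Hoeffding's inequality. First I would write $\sigma_{uv}(S)$ explicitly as $\frac{1}{L}\sum_{j=1}^{L} X_j \cdot (\text{scaling})$, identify the correct scaling so that $\mathbb{E}[\sigma_{uv}(S)] = \tau_{uv}(S)$, and confirm each summand lies in a bounded range determined by the number of paths for that pair.

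Next I would apply Hoeffding's inequality to each per-pair estimator to obtain a tail bound of the form
\begin{equation}
\Pr\!\left(|\tau_{uv}(S) - \sigma_{uv}(S)| \ge \frac{\alpha\delta^{T_0}}{k}\right) \le 2\exp\!\left(-\frac{2L\alpha^2}{k^2}\cdot(\text{const})\right).
\end{equation}
The choice $L = 3k^2\log(2n^2)/2\alpha^2$ is engineered precisely so that this right-hand side becomes at most $1/n^4$ per pair (up to the logarithmic factor $\log(2n^2)$ producing the $n$-power and the factor $3/2$ in $L$ absorbing the constants from the Hoeffding exponent). I would then invoke the decomposition $\tau(S)=\sum_{(u,v)}\tau_{uv}(S)$ and $\sigma(S)=\sum_{(u,v)}\sigma_{uv}(S)$ stated in the text, so that by the triangle inequality $|\tau(S)-\sigma(S)| \le \sum_{(u,v)}|\tau_{uv}(S)-\sigma_{uv}(S)|$; if every per-pair deviation is below $\alpha\delta^{T_0}/k$ and there are $k$ pairs, the aggregate deviation is strictly below $\alpha\delta^{T_0}$, as desired.

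Finally, a union bound over the $k$ pairs (and, if the estimator's randomness must hold simultaneously for all candidate nodes $i$, possibly over the $n$ vertices as well) converts the per-event failure probability into the stated $1/n^3$ bound; this is where the extra factor of $n$ in $\log(2n^2)$ and the cubic exponent are consumed. The main obstacle I anticipate is getting the scaling and the bounded-range of the summands exactly right so that the Hoeffding exponent matches the prescribed $L$ — in particular, the appearance of $\delta^{T_0}$ on the right-hand side must come from the maximum number of length-$\le T_0$ paths through any vertex, which is $O(\delta^{T_0})$ by the degree bound, and tracking this worst-case path count through the rescaling is the delicate part. A secondary subtlety is ensuring the samples for distinct pairs are independent so that the union bound applies cleanly; this should follow from the sampling procedure drawing paths independently per pair as described in Section~\ref{sect:hq}.
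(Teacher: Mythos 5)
Your proposal follows essentially the same route as the paper's proof: apply Hoeffding's inequality per pair to the importance-weighted summands $Y(q_l)=I(\cdot)/h(q_l)$, which lie in $[0,K]$ with $K=\delta^{T_0}$ (since $h(q)\ge\delta^{-T_0}$), with deviation threshold $\alpha\delta^{T_0}/k$, then union bound over the $k$ pairs and sum the per-pair deviations via the triangle inequality. The only quibble is your description of the summands as rescaled Bernoulli trials --- they are importance-sampling weights bounded by $\delta^{T_0}$, not indicators scaled by a global constant --- but you correctly identify the bounded range and the resulting exponent, so the argument goes through as in the paper.
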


\begin{theorem}\label{thm:gestapprox}
  Given an instance $(G, c, d, \mathcal{S})$ of
  uniform vertex cost T-MULTI-PCUT whose length function $d$ is bounded below,
  let $\delta$ be the maximum degree in $G$.  
  With probability at least $1 - 1/n$, 
  Alg. \ref{alg:sample_greedy} returns a feasible solution $W$
  with cost within ratio
  $O \left( \alpha \delta^{T_0} + \log | \mathcal{S} | \right)$
  of optimal.
  The running time of Alg. \ref{alg:sample_greedy} is 
  $O( k^3n\log ( 2 n^2 ) / 2\alpha^2 )$.
\end{theorem}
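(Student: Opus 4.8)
The plan is to prove Theorem \ref{thm:gestapprox} by combining the accuracy guarantee of Lemma \ref{lemm:samples} with a standard greedy set-cover analysis, treating GEST as a greedy algorithm that covers the path set $\mathcal{P}$ but using the estimator $\sigma$ in place of the exact coverage count $\tau$. First I would argue feasibility and correctness: by the proof of Lemma \ref{lemm:cov_eq}, once every path in $\mathcal{P} = \bigcup_{(u,v)\in\mathcal{S}} \mathcal{P}(u,v)$ is intersected by $W$, we have $d(u,v) > T$ for every pair, so the returned $W$ is feasible. The algorithm terminates exactly when all pairs are $T$-separated, so the output is always a valid solution.

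The core of the argument is the approximation ratio, and this is where the estimation error enters. I would first condition on the high-probability event that Lemma \ref{lemm:samples} holds, i.e. that $|\tau(S) - \sigma(S)| < \alpha\delta^{T_0}$ for the relevant sets $S$ encountered during execution; by a union bound over the at most $O(n)$ iterations (each adding one node, so at most $n$ iterations), the failure probability $1/n^3$ per query accumulates to at most $1/n$, which yields the claimed overall success probability $1 - 1/n$. On this good event, the greedy step that maximizes $\sigma(W \cup \{i\})$ chooses a node whose \emph{true} marginal coverage is within an additive $2\alpha\delta^{T_0}$ of the node maximizing the true marginal $\tau(W \cup \{i\})$. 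I would then run the classical greedy set-cover potential/charging argument: the exact greedy achieves an $O(\log|\mathcal{P}|) = O(T_0 \log n)$ ratio because each path belongs to at most $|\mathcal{P}|$ uncovered paths, but the additive slack $2\alpha\delta^{T_0}$ per step degrades the guarantee by an additive term proportional to $\alpha\delta^{T_0}$. Tracking how the $\log$ factor should be expressed in terms of $|\mathcal{S}| = k$ rather than the raw path count requires noting that the number of pairs, not the number of paths, controls the covering structure after normalization, giving the stated $O(\alpha\delta^{T_0} + \log|\mathcal{S}|)$ form.

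For the running time, I would bound the work per iteration and multiply by the number of iterations. Sampling $L = O(k^2 \log(2n^2)/\alpha^2)$ paths per pair over $k$ pairs, then evaluating $\sigma(W \cup \{i\})$ for each candidate $i \in V \setminus W$, gives $O(n \cdot kL) = O(k^3 n \log(2n^2)/\alpha^2)$ work; since the number of greedy iterations is $O(n)$ but the dominant factor is already absorbed, I would verify the bookkeeping reduces to the stated $O(k^3 n \log(2n^2)/2\alpha^2)$, relying on the fact that the sampling (via the procedure of Section \ref{sect:hq}) and estimator evaluation (Section \ref{sect:estimators}) are the bottleneck.

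The main obstacle I anticipate is making the greedy analysis rigorous in the presence of a \emph{biased-by-estimation} selection rule: the standard set-cover ratio bound relies on each step covering a number of new elements within a constant factor of the best possible, but here the estimator error is \emph{additive} ($\alpha\delta^{T_0}$) rather than multiplicative, so the usual harmonic-sum telescoping must be adapted to separate the multiplicative $\log$ contribution from an additive error term that accumulates across iterations. Carefully controlling this accumulation — showing it contributes $O(\alpha\delta^{T_0})$ to the ratio rather than blowing up with the number of iterations — is the delicate step, and it is precisely why the bound has the additive $\alpha\delta^{T_0}$ form and why $\delta^{T_0}$ (the maximum possible number of length-$\le T_0$ paths through a node) appears as the natural scale for the estimator error.
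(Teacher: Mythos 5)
Your proposal follows essentially the same route as the paper's proof: condition on the concentration event of Lemma \ref{lemm:samples} via a union bound over all estimator queries, run the standard greedy set-cover telescoping with an additive per-step error of $O(\alpha\delta^{T_0})$ (which the paper formalizes through the submodularity inequality $\Delta_x\sigma(S)\ge\Delta_x\sigma(T)-\varepsilon$ and the recurrence $P-\sigma(A_i)\le P(1-1/o)^i+\varepsilon o$), and account for the running time as (iterations) $\times$ ($k$ pairs) $\times$ ($L$ samples). The ``delicate step'' you flag --- keeping the accumulated additive error at $O(\alpha\delta^{T_0})$ in the ratio rather than letting it grow with the iteration count --- is exactly what the paper's stopping-index argument (inequalities \eqref{ineq_diff2}--\eqref{ineq_size_bd}) handles, so no further commentary is needed.
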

\begin{proof}
  Let $\Delta_{x} \tau( S ) = \tau \left( S \cup \{ x \}  \right) - \tau (S), \forall S\subseteq V, \forall x\in V$; then for any $S \subset T$, observe that
  \begin{equation} \label{eq:tausm} \Delta_{x} \tau (S) \ge \Delta_{x} \tau (T). \end{equation}
We will apply Lemma \ref{lemm:samples} and consider that the inequality
therein always holds; later, we will consider the probability that
the inequality in Lemma \ref{lemm:samples} does not hold for 
some application.
Let $\varepsilon = 4 \alpha \delta^{T_0}$ and apply
Lemma \ref{lemm:samples}. 
By \eqref{eq:tausm}, we have:
  \begin{equation} \label{eq:apxsm}
    \Delta_{x} \sigma (S) \ge \Delta_{x} \sigma (T) - \varepsilon .
  \end{equation}
  Observe that Alg. \ref{alg:sample_greedy} at each iteration picks
  $a_i$ such that $a_i = \argmax \Delta_{a_i} \sigma ( \{ a_1, \ldots, a_{i-1} \} )$.
  Let $A_i = \{a_1, \ldots, a_i \}$ be the
  choice of Alg. \ref{alg:sample_greedy} after $i$ iterations, and let $A_g$ be the 
  final solution
  returned by the algorithm. Let $o = OPT$ be the size of an optimal
  solution $C = \{c_1, \ldots, c_{o} \}$ satisfying
  $\sigma ( C ) \ge P$, where $P$ is the number of paths
  in $\mathcal{P}$; notice that $\sigma ( S ) \ge P$
  is determined in Alg. \ref{alg:sample_greedy} by testing
  if all pairs in $\mathcal{S}$ satisfy $d(s,t) > T$ after removal
  of $S$.
Then
\begin{align}
  P - \sigma( A_i ) &\le \sigma(A_i \cup C) - \sigma(A_i) \nonumber \\
  &= \sum_{j = 1}^{o} \Delta_{c_j} \sigma \left( A_i \cup \{ c_1, \ldots, c_{j-1} \} \right) \nonumber \\
  &\le \sum_{j = 1}^{o} \Delta_{c_j} \sigma ( A_i ) + o\varepsilon \qquad \text{(by Eq. \ref{eq:apxsm} )} \nonumber \\
  &\le o \cdot \left[ \sigma( A_{i + 1} ) - \sigma( A_i ) + \varepsilon \right]. \label{increase_bound}
\end{align}
Therefore, $P - \sigma( A_{i + 1} ) - \varepsilon \le \left( 1 - \frac{1}{o} \right) (P - \sigma( A_i ) ).$
Then
\begin{align}
  P - \sigma(A_i) &\le P \left( 1 - \frac{1}{o} \right)^i + \varepsilon \sum_{j=0}^{i - 1} \left( 1 - \frac{1}{o} \right)^j \nonumber \\
  &\le P \left( 1 - \frac{1}{o} \right)^i + \varepsilon o. \label{ineq_diff}
\end{align}
From here, there exists an $i$ such that the
following differences satisfy
\begin{align}
P - \sigma(A_{i}) &\ge o( 1 + \varepsilon) \text{, and} \label{ineq_diff2} \\
P - \sigma(A_{i + 1}) &< o( 1 + \varepsilon ).
\label{ineq_size_bd}
\end{align}
Thus, by inequalities (\ref{ineq_diff}) and
(\ref{ineq_diff2}),
$o \le P \exp \left( \frac{-i}{o} \right),$
and 
$ i \le o \log \left( \frac{ P}{ o } \right).$
By inequality
(\ref{ineq_size_bd}) and the assumption on the
termination of the algorithm, the greedy
algorithm adds at most
$o(1 + \varepsilon )$ more elements, so
$g \le i + o(1 + \varepsilon ) \le o \left( 1 + \varepsilon  + \log \left( \frac{P}{o} \right) \right).$ In Alg. \ref{alg:sample_greedy}, we require the guarantee from Lemma \ref{lemm:samples} for all nodes $i\in V\backslash W$ for all iterations, which can happen $n^2$ times in the worst case. Therefore, by union bound, the probability of having the desired approximation ratio is at least $1-1/n$.
  The running time follows from the choice of $L$. Alg. \ref{alg:sample_greedy} needs to sample $k$ sets of $L$ samples per iteration and in the worst case, there can be $n$ iterations. 
\end{proof}

\begin{algorithm}[ht]
 \KwIn{Instance $(G,c,d,\mathcal{S})$, 
   accuracy parameter $\alpha \in (0, 1)$}
 \KwOut{Critical set of vertices $W$}
 $W = \emptyset$, $L = 3k^2\log ( 2 n^2 ) / 2\alpha^2$\;
 \While{$\exists$ pair $(u,v) \in \mathcal{S}$ with $d(u,v) < T$}{
   $x_i = 0$ for all $i \in V \backslash W$\;
   \ForEach{$(u,v) \in \mathcal{S}$ with $d(u,v) < T$}{
     Sample $L$ paths $\{q_1, \ldots, q_L \}$ in $\mathcal{R}(u,v)$ with Alg. \ref{alg:sample}\;
     \ForEach{$i \in V \backslash W$}{
       Compute estimator $\sigma_{uv}(W\cup \{i\})$ in Eq. \eqref{eq:est} using $\{q_1,q_2,...,q_L\}$\;
       $x_i = x_i + \sigma_{uv}(W\cup \{i\})$\;
     }
   }
   Let $i' = \argmax_i x_i$\;
   $W = W \cup \{ i' \}$\;
 }
 Return $W$\;
 \caption{GEST: A greedy estimation algorithm for MULTI-PCUT}
 \label{alg:sample_greedy} 
\end{algorithm}

\subsubsection{The estimators} \label{sect:estimators}
Let $u, v \in V$, and let $\mathcal{P}^i(u,v)$ be the
set of all paths $p$ between $u, v$ satisfying
the distance constraint $d(p) \le T$ and additionally
vertex $i \in p$. We want to 
efficiently estimate the 
quantity $\tau_{uv}(W\cup\{i\}) := | \cup_{j\in W\cup \{i\}}\mathcal{P}^j(u,v) |$ for all
$i \in V\backslash W$. 
To achieve this estimation, we
adapt the approach of Roberts et al. \cite{Roberts2007};
their estimators are for the total number
of simple paths in a graph, while we require
as estimation of the number of simple paths each 
vertex $v \in G$ lies upon, where the length
of each path is restricted to be at most $T$.

To define an estimator $\sigma_{uv}(W\cup\{i\})$, 
we proceed in the following way. 
Let $q$ be any simple path between
$u$ and $v$; we will define a probability
distribution $h(q)$ on paths $q$ satisfying
$h( q ) \neq 0$ if $q \in \mathcal{P}(u,v)$;
the\ distribution $h(q)$ is defined in 
Section \ref{sect:hq} and will have
domain $\mathcal{R}(u, v)$, a set of
simple paths starting from $u$.
We will then independently sample paths 
$q_1, \ldots, q_L$ from $h(q)$ and define
the estimator
\begin{equation} \label{eq:est}
  \sigma_{uv}(W\cup\{i\}) = \frac{1}{L} \sum_{l = 1}^L \frac{ I \left( q_l \in \cup_{j\in W\cup\{i\}}\mathcal{P}^j(u,v) \right)}{ h(q_l ) }, 
\end{equation}
where $I \left( q_l \in \cup_{j\in W\cup\{i\}}\mathcal{P}^j(u,v) \right)$ is an indicator random variable
that takes value $1$ if $W\cup\{i\} \cap q_j\neq \emptyset$ and $q_j \in \mathcal{P}(u,v)$, and
$0$ otherwise.
\begin{lemma}
  $\sigma_{uv}(W\cup\{i\})$ is an unbiased estimator of $\tau_{uv}(W\cup\{i\})$. 
\end{lemma}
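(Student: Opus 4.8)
The plan is to recognize $\sigma_{uv}(W \cup \{i\})$ as an importance-sampling (Horvitz--Thompson type) estimator and to compute its expectation directly. Since the $L$ sampled paths $q_1, \ldots, q_L$ are drawn independently from the distribution $h$ with support $\mathcal{R}(u,v)$, linearity of expectation reduces the problem to the expectation of a single summand, the factor $1/L$ merely averaging $L$ identically distributed terms. Writing $A = \cup_{j \in W \cup \{i\}} \mathcal{P}^j(u,v)$ for the target collection of paths to be counted, I would compute, over $q \sim h$,
\[ \mathbb{E}\left[\frac{I(q \in A)}{h(q)}\right] = \sum_{q \in \mathcal{R}(u,v)} h(q) \cdot \frac{I(q \in A)}{h(q)} = \sum_{q \in \mathcal{R}(u,v)} I(q \in A), \]
where the cancellation of the sampling weight $h(q)$ is the crux of the argument.

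The one place that requires care --- and the main obstacle --- is justifying this cancellation, which is valid only when $h(q) \neq 0$ for every path $q$ contributing to the sum, i.e. for every $q \in A$. Here I would invoke the defining property of $h$ from Section \ref{sect:hq}, namely that $h(q) \neq 0$ whenever $q \in \mathcal{P}(u,v)$, together with the inclusion $A \subseteq \mathcal{P}(u,v)$. The latter holds because each $\mathcal{P}^j(u,v)$ is, by definition, a subset of the length-constrained paths $\mathcal{P}(u,v)$. Consequently no path in the target event is assigned zero probability, the estimator is well defined, and dividing by $h(q)$ is legitimate.

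Finally, since $A \subseteq \mathcal{P}(u,v) \subseteq \mathcal{R}(u,v)$, the remaining sum collapses to $\sum_{q \in \mathcal{R}(u,v)} I(q \in A) = |A| = \tau_{uv}(W \cup \{i\})$. Combining these observations, each of the $L$ summands has expectation $\tau_{uv}(W \cup \{i\})$, and averaging yields $\mathbb{E}\left[\sigma_{uv}(W \cup \{i\})\right] = \tau_{uv}(W \cup \{i\})$, which is exactly the claimed unbiasedness. I expect no serious difficulty beyond the support/well-definedness check above; the rest is the standard unbiasedness calculation for importance sampling, and no concentration or sample-size bound (those are deferred to Lemma \ref{lemm:samples}) enters the argument here.
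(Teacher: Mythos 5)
Your proposal is correct and follows essentially the same route as the paper: both compute the expectation of a single importance-weighted summand, cancel the $h(q)$ factor against the sampling probability, and conclude that the sum collapses to $\left|\cup_{j\in W\cup\{i\}}\mathcal{P}^j(u,v)\right| = \tau_{uv}(W\cup\{i\})$. Your version is in fact slightly more careful on two points the paper glosses over --- explicitly verifying that $h(q)\neq 0$ on the support of the indicator so the division is legitimate, and finishing by linearity of expectation rather than the paper's (misplaced) appeal to the law of large numbers.
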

\begin{proof}
  Let $Y(q)$ be the random variable
  \begin{equation} \label{eq:Yq}
    Y(q) := \frac{ I \left( q_l \in \cup_{j\in W\cup\{i\}}\mathcal{P}^j(u,v) \right) }{h(q)}, 
    \end{equation}
  for $q \in \mathcal{R}(u,v)$.
  Then the expection of $Y(q)$ is
  \begin{align*}
    \mathbf{E} \left( Y(q) \right)  
    &= \sum_{q \in \mathcal{R}(u,v)} I \left( q \in \cup_{j\in W\cup\{i\}}P^j (u,v) \right) \\
    &= \left| \cup_{j\in W\cup\{i\}}P^j (u,v) \right| = \tau_{uv}(W\cup\{i\}).
  \end{align*}
  From here, the lemma follows from the law of large numbers.
\end{proof}
\subsubsection{Definition of $h(q)$ and path sampling} \label{sect:hq}
Next, we define the probability distribution $h(u)$ on $\mathcal{R}(u, v)$,
the set of all simple paths $q = u_0u_1 \ldots u_l$ 
starting from $u$ and ending at $v$
or ending at another vertex $v'$ and is maximal; that is, adding
any vertex $u_{l+1}$ to $q$ creates a cycle or causes the length
of the path to exceed $T$. We define the probability of a path
$q \in \mathcal{R}(u,v)$ sequentially: $h(q) := \prod_{i = 1}^l h( u_i | u_0u_1 \ldots u_{i - 1} ).$
Notice that $h(u_0) = h(s) = 1$ since $s$ is always chosen as
the starting vertex. Furthermore, $h( u_i | u_0 \ldots u_{i - 1} )$
is a uniform distribution over the number of vertices available
to be chosen as the next vertex of the path; that is $u_i$ does
not create a cycle and $d( u_0 \ldots u_i ) \le T$.

\begin{algorithm}[h]
 \KwIn{Graph $G$, pair of vertices $(u,v)$, $T$}
 \KwOut{A path $q \in \mathcal{R}(u,v)$, and probability value $h(q)$}
 $u_0 = u$, 
 $h = 1$,
 $i = 0$\;
 \While{$u_i \neq v$}{
   Set $N( u_i )$ equal to those 
   neighbors of $u_i$ not already in $q$, and whose addition
   to $q$ maintain $d(q) \le T$\;
   \If{ $N( u_i ) == \emptyset$ }{ break \; }
     
   Choose $u_{i + 1}$ from $N( u_i )$ with probability $1 / \left| N( u_i ) \right|$\;
   $h = h \cdot \frac{1}{ \left| N( u_i ) \right| }$, $q = q u_{i+1}$ \;
   $i = i + 1$ \;
 }

 Return $q = u_0, \ldots, u_l$, $h(q) = h$\;
 \caption{Algorithm for sampling from $\mathcal{R}(u,v)$}
 \label{alg:sample} 
\end{algorithm}
The definition of $h$ lends itself to the following sequential
sampling algorithm, shown in Alg. \ref{alg:sample}.
In line 1, the algorithm choose $u_0 = u$ with probability
$h = 1$. Let $N(u_i)$ be the set of neighbors of $u_i$
not previously chosen into the path $q$.
If $N( u_i ) = \emptyset$ or $u_i = v$, the algorithm
terminates. Otherwise $u_{i + 1}$ is chosen from
$N(u_i)$ uniformly with probability $1 / |N(u_i)|$
and the value of $h$ is updated accordingly.
\subsubsection{Bound on number of samples required} \label{sect:sample_number}
In this section, we prove Lemma
\ref{lemm:samples} for how many path samples are
required to ensure $|\tau_{uv}(S) - \sigma_{uv}(S)| \le \alpha \delta^{T_0}/k$.
To this end, we require Hoeffding's inequality 
\begin{theorem_nn}[Hoeffding's inequality]
  Suppose $Y_1, \ldots, Y_L$ are
  independent random variables in
  $[0, K]$. Let $Y = \frac{1}{L} \sum_{i=1}^L Y_i$.
  Then the probability
  $\mathbf{P} \left( \left|Y - \mathbf{E}(Y) \right| \ge t \right) \le 2 \exp \left( \frac{ -2 L t^2 }{ K^2 } \right).$
\end{theorem_nn}

\begin{proof}[Proof for Lemma \ref{lemm:samples}]
Consider $Y_i = Y(q_i)$, where
$Y(q)$ is the random variable defined in
(\ref{eq:Yq}).
Let $K \le \delta^{T_0}$, which is the maximum value of $Y_i$,
and $t = \alpha \delta^{T_0}/k$. 
Next, we require the probability bound from Hoeffding's inequality
to be less than $\frac{1}{n^3k}$. Solving for the
number of samples yields
$L \ge 3k^2\log ( 2 n^2 ) / 2\alpha^2.$
Therefore, when the number of samples is at least $L$, we can guarantee $|\tau_{uv}(S) - \sigma_{uv}(S)| \le \alpha \delta^{T_0}/k$ for one pair $(u,v)\in \mathcal{S}$ with probability $1-\frac{1}{n^3 k}$. Then, the inequality holds for all $(u,v)\in\mathcal{S}$ with probability $1-1/n^3$ by union bound. Since $\tau(S)$ and $\sigma(S)$ are the summations, $|\tau(S)-\sigma(S)|$ is at most $\alpha \delta^{T_0}$ when all the inequalities hold.
\end{proof}

\subsubsection{Further modification to GEST} \label{sect:gest-mod}
In this section, we discuss a simple modifications to
GEST; this modification, GESTA, improves performance for the T-MULTI-PCUT
problem.

\emph{GESTA:} In practice, valid path samples in $\mathcal{P}$ 
become harder to obtain as GEST progresses nearer to a solution
to T-MULTI-PCUT; this fact results from most valid paths originally in the network 
having already been broken. Therefore, we propose
GESTA, a modification to Alg. \ref{alg:sample_greedy}
as follows: if GESTA performs $L$ samples, as in line 5 of GEST,
and obtains no valid paths
in $\mathcal{P}(u,v)$ for any $(u,v) \in \mathcal{S}$, then GESTA computes 
a shortest
path between a randomly chosen pair $(u,v)$ in $\mathcal{S}$ for which $d(u,v) \le T$.
The algorithm then chooses the cheapest node on this path into its solution, and continues
with the \texttt{while} loop on line 2 of GEST.

\section{Experimental evaluation} \label{sect:exp}
In this section, we experimentally evaluate our proposed algorithms
on the QoS vulnerability assessment TCVA in \ref{sect:tcva}. 
In Section \ref{sect:methods}, we discuss the methodology of our evaluation.
\subsection{Datasets and methodology} \label{sect:methods}
\emph{Synthesized datasets:} To generate topologies,
we used a well-known Internet topology generator BRITE \cite{Medina2001};
which we employed to generate (1) Flat Router-Level (RL) only, (2) Flat 
Autonomous System level (AS) only,
and (3) hierarchical top-down datasets, consisting of AS and RL, with each
AS divided into routers.
We also used topologies generated according to Erdos-Renyi (ER) random graphs.
To simulate a QoS metric, edges were weighted uniformly in the interval $[1,10]$, following
\cite{Xue2007,Xuan2010}. The dataset statistics are as follows:
ER1, an ER graph with $n = 1000$, $m = 49995$; RL1, router-level 
graph with $n = 5000$, $m = 250000$, generated by BRITE with default parameters
and Waxman model;
RL2, same as RL1 except $n = 1000, m = 2000$; RL3, same as RL1 except
$n = 100$, $m = 200$; AS1, an AS-level graph generated by BRITE with
default parameters and $n = 10000, m = 498725$; and finally,
H1, a hierarchical BRITE top-down graph with 200 autonomous systems
and 100 routers per AS, with $n = 20000, m = 660604$.


\emph{Algorithms for TCVA:}
For TCVA, 
we compared the following algorithms with 
GEN (Alg. \ref{alg:multipcut}),
FEN (Alg. \ref{alg:FEN}),
and GESTA (Section \ref{sect:gest-mod}):
\begin{itemize}
 \item \emph{OPT:} the optimal solution of 
IP \ref{IP_mp}, which was implemented using the 
IP solver included in the open-source
GNU Linear Programming Kit (GLPK) \cite{Makhorin2012};
\item \emph{MC:} the classical
minimum-cut algorithm implemented with the Goldberg-Tarjan algorithm \cite{Goldberg1988}
for maximum flow, only employed when the size of the target set $|\mathcal{S}| = 1$; and
\end{itemize}



The cost function on vertices employed for TCVA is specified
in each section; when cost is uniform, we refer to the size of
the solution returned by each algorithm.
The path enumeration required for GEN, FEN, and OPT was
parallelized, using at most 25 threads. This parallelization was
accomplished by assigning distinct initial segments of paths to distinct threads.
Also, when $k > 1$, enumerations for distinct pairs were assigned to distinct
threads. Total computation time is the sum of the computation time over all threads. 
Algorithms were limited to one hour of wall-clock time before termination;
this could be much more computation time than one hour depending on the level
of parallelization. All times shown in the results are total computation time.
All experiments
were performed on a machine with Intel(R) Xeon(R) CPU E5-2697 v4 @ 2.30GHz
and 392 GB RAM. 

\subsection{Evaluation for Targeted Assessment (TCVA)} \label{sect:tcva}
\subsubsection{On choice of target set}
In order to evaluate the algorithms for TCVA, it is
necessary to choose the target set $\mathcal{S}$; in practice,
this choice is entirely up to the user. 
First, we discuss the motivation and effectiveness
of choosing the target sets $\mathcal{S}$ uniformly randomly; next,
we observe how restricting the elements of the target set based upon 
their degree affects the size of the optimal solution. 

\emph{Uniformly random:}
One method of evaluating the performance of our algorithms
for TCVA is to measure the average size (or cost) of
the solution over all possible choices of the
target set $\mathcal{S}$. 
To avoid the large computation time 
involved in running each algorithm on
each possible choice of $\mathcal{S}$,
we approximated this value by averaging over $N$ uniformly
random choices of $\mathcal{S}$. To justify this approximation,
we show in Fig. \ref{fig:vsN} the average cost of the solution 
returned by each algorithm versus $N$ on the RL1 dataset,
with $k = |\mathcal{S}| = 1000$ and $T=4$. Also shown is the sample standard
deviation of the $N$ values for the cost. While the value
of the mean fluctuates, the value of these fluctuations is less than 
$10 \%$ despite the huge number ${5000 \choose 1000}$, the number
of possible choices of $\mathcal{S}$.
Qualitatively similar results
were found for the other datasets and $k$ values. Therefore, in the remainder
of this section we average results over $N = 10$ uniformly 
random choices of $\mathcal{S}$
unless otherwise stated, which we found sufficient to identify
trends in the results.

\begin{figure}[ht]
  \centering
  \subfigure[RL1,$k=1000,T=4$] {
    \includegraphics[width=0.22\textwidth]{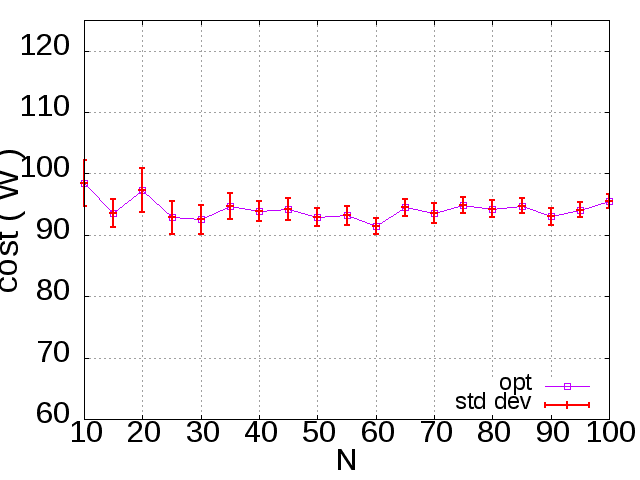}
    \label{fig:vsN}
  }
  \subfigure[RL1, $k=100, T = 5$] {
    \includegraphics[width=0.22\textwidth]{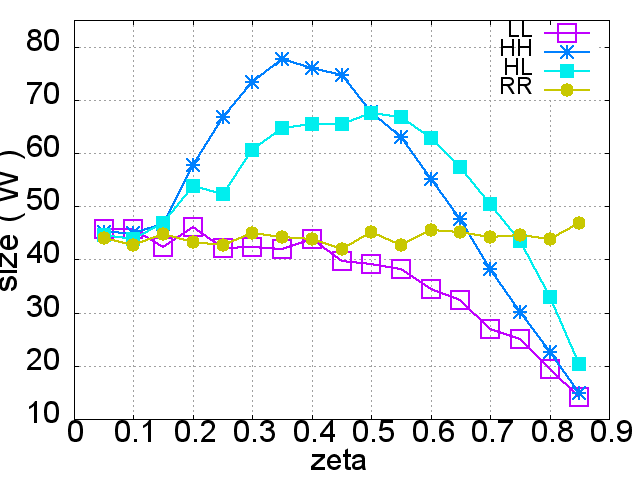}
    \label{fig:targs}
  }

\caption{(a): Average and standard deviation of cost values versus $N$, the number
of random choices of $\mathcal{S}$. (b): Impact of restricting the
choice of target set by degree on the size of the optimal solution to TCVA.} \label{fig:N}
\end{figure}

\emph{By degree:} Next, we observed how restricting the choice of
the target set by degree impacts the size of the optimal
solution. For the purposes of this assessment, let 
$\zeta \in (0,1)$, and let $\delta$ be the maximum
degree in graph $G = (V,E)$;
define the following two sets of vertices:
$H = \{ v \in V: d(v) \ge \zeta \delta \}$,
$L = \{ v \in V: d(v) \le (1 -\zeta) \delta \}$.
Then we may restrict a source or target node to lie uniformly 
randomly within one of these sets. 
We consider four different schemes of choosing the target
set based upon $H,L$: HL, HH, LL, and RR. In HL,
for each pair $(s,t) \in \mathcal{S}$, $s$ is chosen uniformly
random from $H$, and $t$ is chosen uniformly randomly within $L$.
HH and LL are defined analogously, and RR chooses both nodes
of each pair uniformly randomly from the entire vertex, as 
in the previous section.

In Fig. \ref{fig:targs}, we plot the size of the optimal solution
to TCVA versus $\zeta$ for each scheme of target set selection,
averaged over $N=10$ choices of $\mathcal{S}$.
The results for LL and RR are as expected; RR shows no dependence
on $\zeta$, and LL is approximately equal to RR for low values
of $\zeta$ before decreasing monotonically as $\zeta$ approaches
1. However, HH and HL initially increase before decreasing below
RR -- this behavior is explained by the cardinality of $H$ and $L$
in addition to the restriction upon the degree. As $\zeta$ increases,
the cardinality of $H, L$ decrease; as these cardinalities decrease,
it becomes more likely that an element from one pair in the target
set appears in another pair, even though all pairs in the target set
$\mathcal{S}$ are distinct. As the fraction of nodes appearing in
multiple pairs increases, it becomes easier to pseudo-separate 
the target set. This effect counteracts the fact that higher degree
nodes are more difficult to pseudo-separate.

\begin{figure*}[ht]
  \centering
  \subfigure[RL1, $T = 6$] {
    \includegraphics[width=0.22\textwidth]{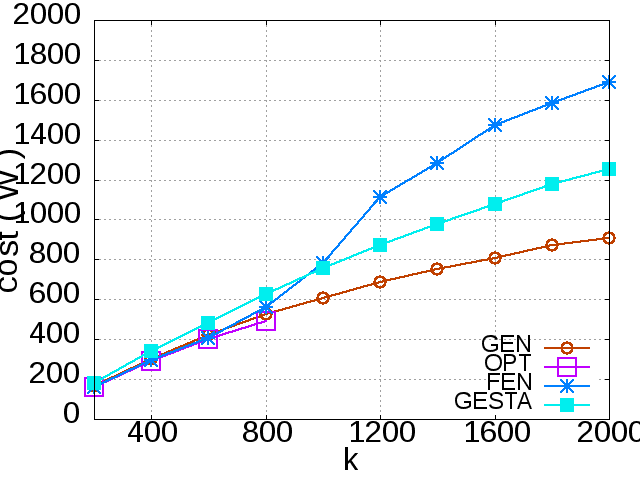}
    \label{fig:rl1_6}
  }
  \subfigure[AS1, $T = 6$] {
    \includegraphics[width=0.22\textwidth]{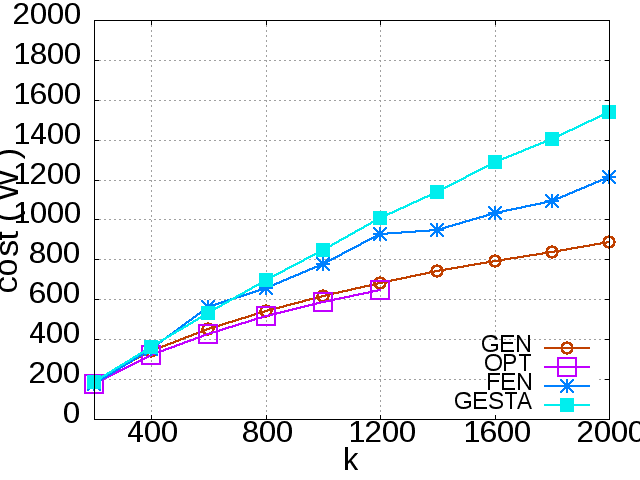}
    \label{fig:as1_6}
  }
  \subfigure[ER1, $k = 1$] {
    \includegraphics[width=0.22\textwidth]{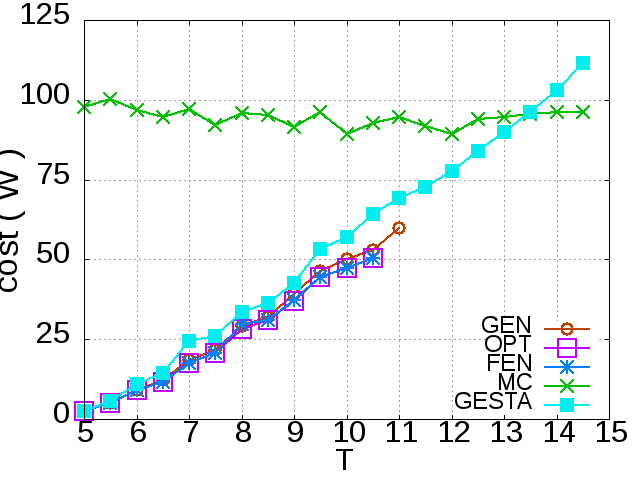}
    \label{fig:er1_1}
  }
  \subfigure[AS1, $k = 1000$] {
    \includegraphics[width=0.22\textwidth]{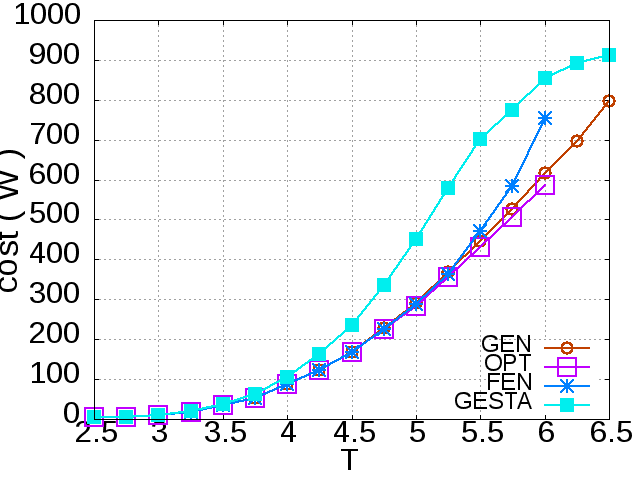}
    \label{fig:as1_5}
  }

  \subfigure[AS1, $T = 6$] {
    \includegraphics[width=0.22\textwidth]{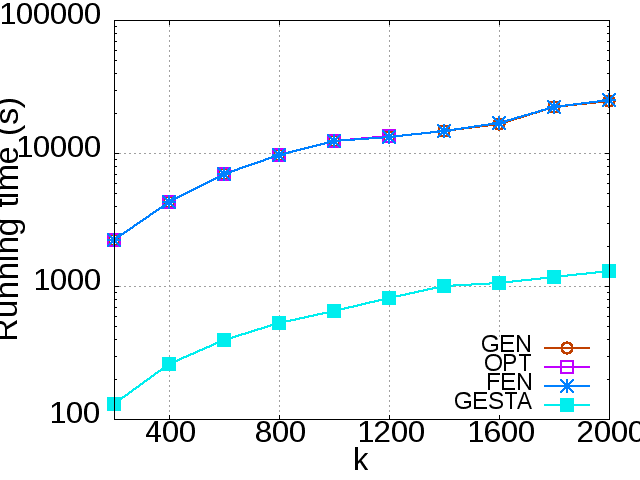}
    \label{fig:as1_6t}
  }
  \subfigure[H1, $T = 10$] {
    \includegraphics[width=0.22\textwidth]{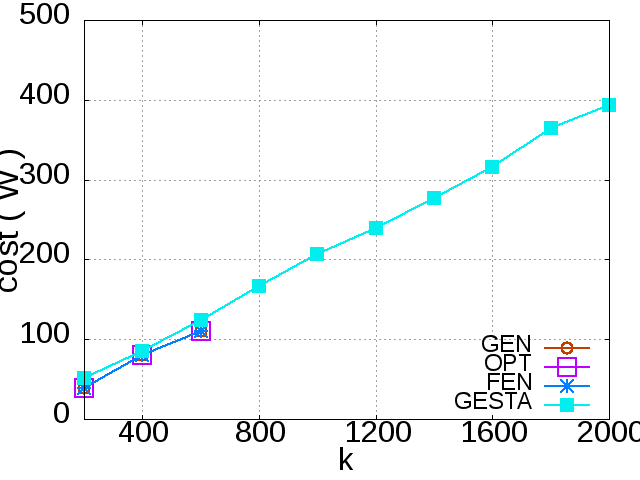}
    \label{fig:h1_6}
  }
  \subfigure[ER1, $k = 1000$] {
    \includegraphics[width=0.22\textwidth]{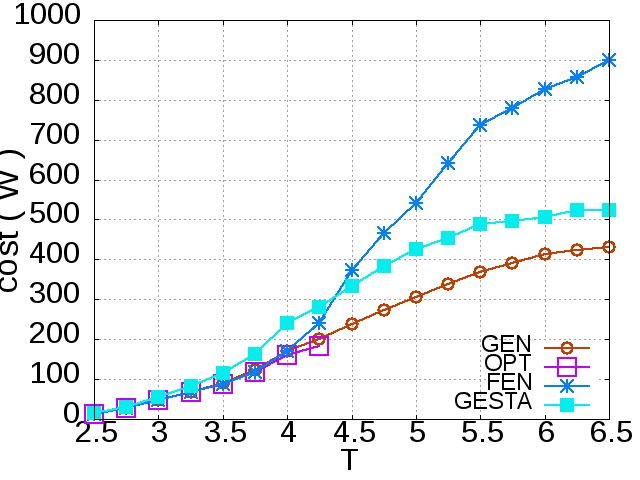}
    \label{fig:er1_5}
  }
  \subfigure[AS1, $k = 1000$] {
    \includegraphics[width=0.22\textwidth]{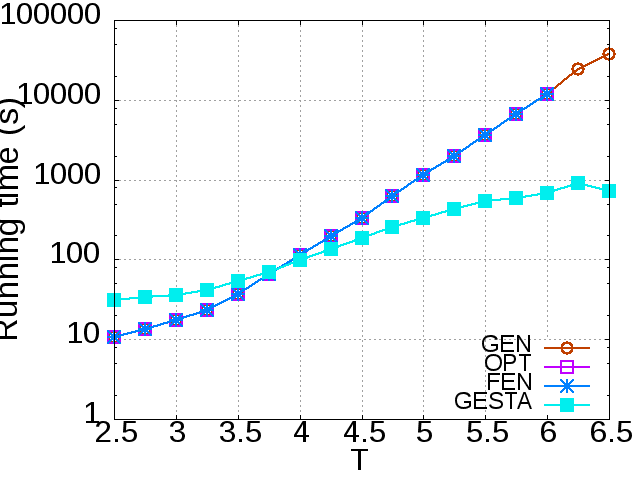}
    \label{fig:as1_5t}
  }


  \caption{(a) -- (h): Results for TCVA, described in the text.} \label{fig:kT}
\end{figure*}
\subsubsection{Size of target set} \label{exp:target1} 
In this section, we fixed a constant $T$ for each dataset, let
vertices have uniform cost,
and observed the behavior of the algorithms when $k = |\mathcal{S}|$ was incremented
from $k = 200$ to $2000$. The only algorithm able to run 
on all datasets and $k$ values was GESTA, and it demonstrated
good performance (always within a factor of 2 in solution cost)
in comparison with OPT while running faster than
the other algorithms by a factor of more than 10.  
Representative results are shown in
the first two columns of Fig. \ref{fig:kT}. 
GEN outperforms GESTA and is
the algorithm consistently the closest in performance to OPT
when both run. Second best alternates between GESTA and FEN on
RL1 and AS1, respectively.
For each dataset,
at some $k$ value, OPT exceeds one hour of computation time
and is no longer included in
the results. Notice on our largest dataset H1, with $T = 10$, 
neither GEN nor FEN can run after $k = 600$. Both of these
algorithms require the enumeration of $\mathcal{P}$, which was unable
to complete after this value of $k$ on this dataset. However,
on RL1 and AS1, GEN and FEN continue
to finish within one hour throughout the experiment;
notice from the running time shown in
Fig. \ref{fig:as1_6t} that the asymptotic behavior 
of the running time for fixed $T$
of GEN is linear in $k$, consistent with Theorem \ref{thm:GEN}.
In practice, GESTA runs faster than GEN and FEN by a constant factor of more than 10
on all inputs.

\subsubsection{Varying threshold $T$}
In this section, we consider two choices of $k$: $k = 1$, and
$k = 1000$. We then observed the behavior of the algorithms 
when $T$ was incremented; representative results are shown
in the last two columns of Fig. \ref{fig:kT}. When $k = 1$,
we compared the performance of our algorithms to the classical
MC algorithm (Fig. \ref{fig:er1_1}); as expected, MC returned a result independent of 
$T$, which demonstrates the inadequacy of solutions to the
classical cutting problems for
our assessments: for example, at $T = 7$,
MC is returning a solution of size more than four
times the optimal, and it does comparatively worse
for lower values of $T$.  
Also, we observe experimentally that as $T$ increases, we recover the
classical version of our problem: past $T = 13$, GESTA is completely separating the
input pair, and returning a solution of size similar to MC. 

As in the previous section, the only algorithm able to run for all
parameter values was GESTA, which maintained performance
within factor 2 of OPT. Although not as scalable as GESTA, GEN 
consistently outperformed the other algorithms in size of solution. 
On ER1, shown in Fig. \ref{fig:er1_1},
GEN was limited by the path enumeration time after $T = 11$, 
and FEN and OPT were unable to finish solving 
the LP \ref{IP_mp}; this LP solution is necessary for the rounding of FEN
and the integer solver of GLPK. Indeed,
the running time of GEN and FEN increased exponentially with $T$ (Fig. \ref{fig:as1_5t})
as expected. 

\subsubsection{Discussion}
Throughout the TCVA experiments, we consistently 
observed the best performance compared to the optimal 
by GEN, which was able to run in many situations where OPT could not finish. 
Furthermore, GEN scales well with the size of the target set $|\mathcal{S}|$.
However, as the threshold value $T$ becomes relatively large, 
LP \ref{IP_mp} becomes much larger and thus more difficult to solve; for this reason, GEN was 
unable to finish when $T$ became large.
In these cases, we demonstrated that the
approach of GESTA scales well with both the size of $|\mathcal{S}|$
and the threshold value $T$, while maintaining good performance with
respect to the optimal.

\section{Conclusions and Future Work} \label{sect:conclusion}
In this work, we introduced three new combinatorial pseudocut problems. We analyzed the
computational complexity of these problems, and we provided three approximation 
algorithms. We used the pseudocut problems to formulate a vulnerability
assessment TCVA with respect to an arbitrary additive QoS metric on a communications network.
Future work would include extending this assessment to incorporate 
more than one QoS metric; however,
this is likely to be difficult as the problem of finding a routing path satisfying two
or more QoS constraints is NP-hard; however, approximation algorithms do exist for this problem
\cite{Xue2007}. In addition, the computational complexity of the uniform
edge length version of our simplest problem, T-PCUT, is left open; our NP-hardness
proof required nonuniform edge lengths and we provided polynomial-time algorithms
only for special cases.

In our experimental evaluation, we found our $O( \log n )$-approximation GEN for T-MULTI-PCUT
to consistently return the solution closest to the optimal value, although its asymptotic
ratio is worse than the $(T + 1)$ ratio of FEN; however, for applications that demand a high
value for $T$, our experiments showed that GEN and FEN may be unsuitable, despite the
ease with which path enumeration may be parallelized -- for this case,
minor modifications to our probabilistic algorithm GEST were shown to give good performance in practice.
The modifications to GEST were necessary because of the difficulty of obtaining valid 
path samples when GEST is close to a feasible solution; future work would include boosting the
ability of GEST to obtain valid samples of paths between a terminal pair $(u,v) \in \mathcal{S}$,
so that heuristic modification GESTA becomes unnecessary.
\bibliographystyle{unsrt}
\bibliography{info_16}

\appendix 
\subsection{Edge versions} \label{apx:edge}
Let $T$ be an arbitrary but fixed constant throughout this section.
The problems will take as input a triple $(G,c,d)$,
where $G$ is a directed graph $G = (V,E)$;
$c: E \to \mathbf{R}^+$ is a cost function on edges
representing the difficulty of removing
each edge; and $d: E \to \mathbf{R}^+$ is a length function on
edges.
Although both $c$ and $d$ may be considered weight
functions, we use \emph{cost} for $c$ and \emph{length} for $d$
to avoid confusion.
The distance $d(u, v)$ between two vertices is the
length of the $d$-weighted, 
directed, and shortest path between $u$ and $v$; the
cost $c(W)$ of set $W$ of a set of edges is the
sum of the costs of individual edges in $W$.
\begin{prob}[Minimum $T$-pseudocut (edge version)]
  Given triple $(G,c,d)$ and a pair $(s,t)$ of vertices of $G$, 
  determine a minimum cost set $W \subset E$ of
  edges such that
  $d( s, t ) > T$ after the removal of $W$ from $G$.
  \label{prob:min-pcut-edge}
\end{prob}

\begin{prob}[Minimum $T$-multi-pseudocut (edge version)]
  Given triple $(G,c,d)$, and a target set of pairs of vertices of $G$,
  $\mathcal{S} = \{ (s_1, t_1), (s_2, t_2), \ldots, (s_k, t_k) \}$,
  determine a minimum cost set $W$ of edges such that
  $d( s_i, t_i ) > T$ for all $i$ after the
  removal of $W$ from $G$. \label{prob:multi-pcut-edge}
\end{prob}


\subsection{Algorithms for edge versions} \label{apx:ev-algs}
If paths from $u$ to $v$ are defined as sequences of
edges instead of vertices, then, to approximate the 
edge versions, we can define analogous
approximation algorithms to  
GEN, FEN, GEST,
and ENBI
with analagous performance bounds.
For example, we 
define an analogous 
program to IP \ref{IP_mp} for the edge version of 
MULTI-PCUT below.

We will consider simple paths $p = p_0 p_1 \ldots p_l \in E$; that
is, paths containing no cycles. 
Let $\mathcal{P}( s_i , t_i )$ denote the set of simple paths $p$
between $(s_i,t_i) \in \mathcal{S}$ that satisfy the 
condition $d(p) \le T$. If an edge $u$ lies on path $p$,
we write $u \in p$. 
Consider the edge set of $G$ to be $\{1, \ldots, n \}$.
Let $A^{(u,v)}_{p, i} = 1$ if edge $i$ lies on path
$p \in \mathcal{P}(u, v)$, where $(u,v) \in \mathcal{S}$. 
If $i \not \in p$, let $A^{(u,v)}_{p, i} = 0$.
Also, let variable $w_i = 1$ if edge $i$ is to be
chosen into the set of edges $W$, 
and $0$ otherwise. Finally, denote the cost of choosing
edge $i$ as $c_i$, and let vectors $w = (w_1, \ldots, w_n)$ and
$c = (c_1, \ldots, c_n)$.
Then,
the covering $0-1$ integer program formulation 
is as follows.

\begin{IP}[Edge MULTI-PCUT] \label{IP_mp-edge}
\begin{align} 
  & \min c \cdot w \text{ such that } \nonumber \\
  &\sum_{i=1}^n A_{p,i}^{(u,v)} w_i \ge 1, \, \forall p \in P(u,v), \, \forall (u,v) \in \mathcal{S} \label{IP_cov-edge}\\
  & w_i \in \{ 0, 1 \}, \, \forall i \in \{1, \ldots, n \} .
\end{align}
\end{IP}

\end{document}